\newcommand{\OPT}{\mathrm{OPT}}
\newcommand{\TC}{\mathrm{TC}}
\newcommand{\LP}{\mathrm{LP}}
\theoremstyle{plain}
\newtheorem{thm}{Theorem}
\newtheorem{cor}[thm]{Corollary}
\newtheorem{prop}[thm]{Proposition}
\newtheorem{lem}[thm]{Lemma}
\theoremstyle{definition}
\newcommand{\bd}{C}
\begin{document}
\title{A closest vector problem arising in\\
radiation therapy planning%
\thanks{This research is supported by an ``Actions de Recherche
Concert\'ees'' (ARC) project of the ``Communaut\'e fran\c{c}aise de
Belgique''. C\'eline Engelbeen is a research fellow of the ``Fonds
pour la Formation \`a la Recherche dans l'Industrie et dans
l'Agriculture'' (FRIA) and Antje Kiesel is a research fellow of the German National Academic Foundation.}}

\author{C\'{e}line Engelbeen\thanks{Department of Mathematics,
Universit\'{e} Libre de Bruxelles CP 216, Boulevard du Triomphe,
1050 Brussels, Belgium. E-Mail: \texttt{cengelbe@ulb.ac.be}.}, Samuel Fiorini\thanks{Department of Mathematics,
Universit\'{e} Libre de Bruxelles CP 216, Boulevard du Triomphe,
1050 Brussels, Belgium. E-Mail: \texttt{sfiorini@ulb.ac.be}.}, Antje Kiesel\thanks{Institute for Mathematics, University of Rostock, 18051 Rostock, Germany. E-Mail: \texttt{antje.kiesel@uni-rostock.de}.}}

\maketitle

\begin{abstract}
In this paper we consider the following {\sl closest vector problem}. We are given a set of $0$-$1$ vectors, the {\sl generators}, an integer vector, the {\sl target} vector, and a nonnegative integer $\bd$. Among all vectors that can be written as nonnegative integer linear combinations of the generators, we seek a vector whose $\ell_\infty$-distance to the target vector does not exceed $\bd$, and whose $\ell_1$-distance to the target vector is minimum.

First, we observe that the problem can be solved in polynomial time provided the generators form a totally unimodular matrix. Second, we prove that this problem is NP-hard to approximate within an $O(d)$ additive error, where $d$ denotes the dimension of the ambient vector space. Third, we obtain a polynomial time algorithm that either proves that the given instance has no feasible solution, or returns a vector whose $\ell_\infty$-distance to the target vector is within an $O(\sqrt{d \ln d}\,)$ additive error of $\bd$ and whose $\ell_1$-distance to the target vector is within an $O(d \sqrt{d \ln d}\,)$ additive error of the optimum. This is achieved by randomly rounding an optimal solution to a natural LP relaxation.

The closest vector problem arises in the elaboration of radiation therapy plans. In this context, the target is a nonnegative integer matrix and the generators are certain $0$-$1$ matrices whose rows satisfy the consecutive ones property. Here we begin by considering the version of the problem in which the set of generators comprises all those matrices that have on each nonzero row a number of ones that is at least a certain constant. This set of generators encodes the so-called {\sl minimum separation constraint}. We conclude by giving further results on the approximability of the problem in the context of radiation therapy.

\medskip

\noindent {\sl Keywords}: closest vector problem, decomposition of integer matrices, consecutive ones property, radiation therapy, minimum separation constraint.
\end{abstract}

%------------------------------------------------------------------%------------------------------------------------------------------

\section{Introduction}

\begin{figure}[h!]
	\centering
		\includegraphics{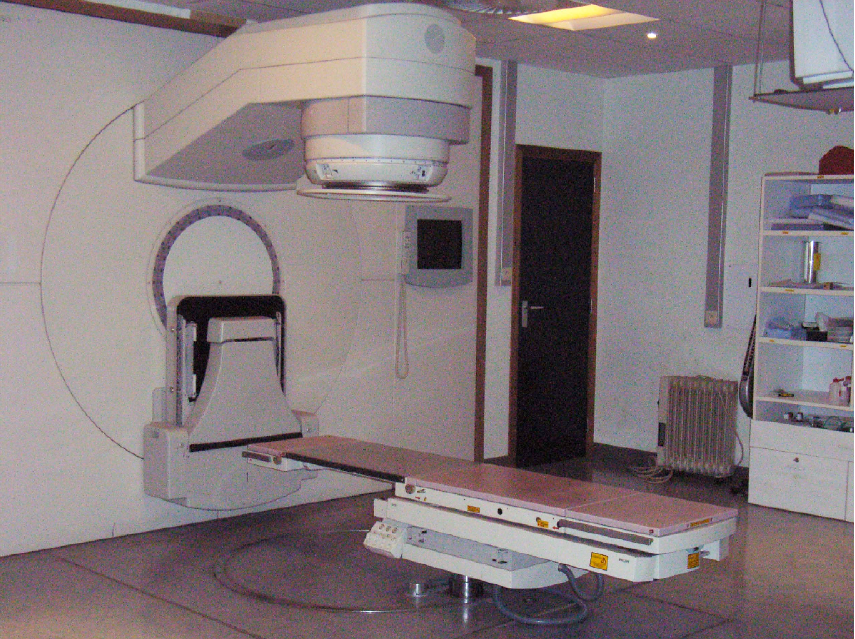}
	\label{fig:linac}
  \caption{A complete treatment unit at Saint-Luc Hospital (Brussels, Belgium).}
\end{figure}

Nowadays, radiation therapy is one of the most used methods for cancer treatment. The aim is to destroy the cancerous tumor by exposing it to radiation while trying to preserve the normal tissues and the healthy organs located in the radiation field. Radiation is commonly delivered by a linear accelerator (see Figure \ref{fig:linac}) whose arm is capable of doing a complete circle around the patient in order to allow different directions of radiation. In intensity modulated radiation therapy (IMRT) a multileaf collimator (MLC, see Figure \ref{fig:mlc}) is used to modulate the radiation beam. This has the effect that radiation can be delivered in a more precise way by forming differently shaped fields and hence improves the quality of the treatment.

\begin{figure}[h!]
	\centering
		\includegraphics[width=5cm]{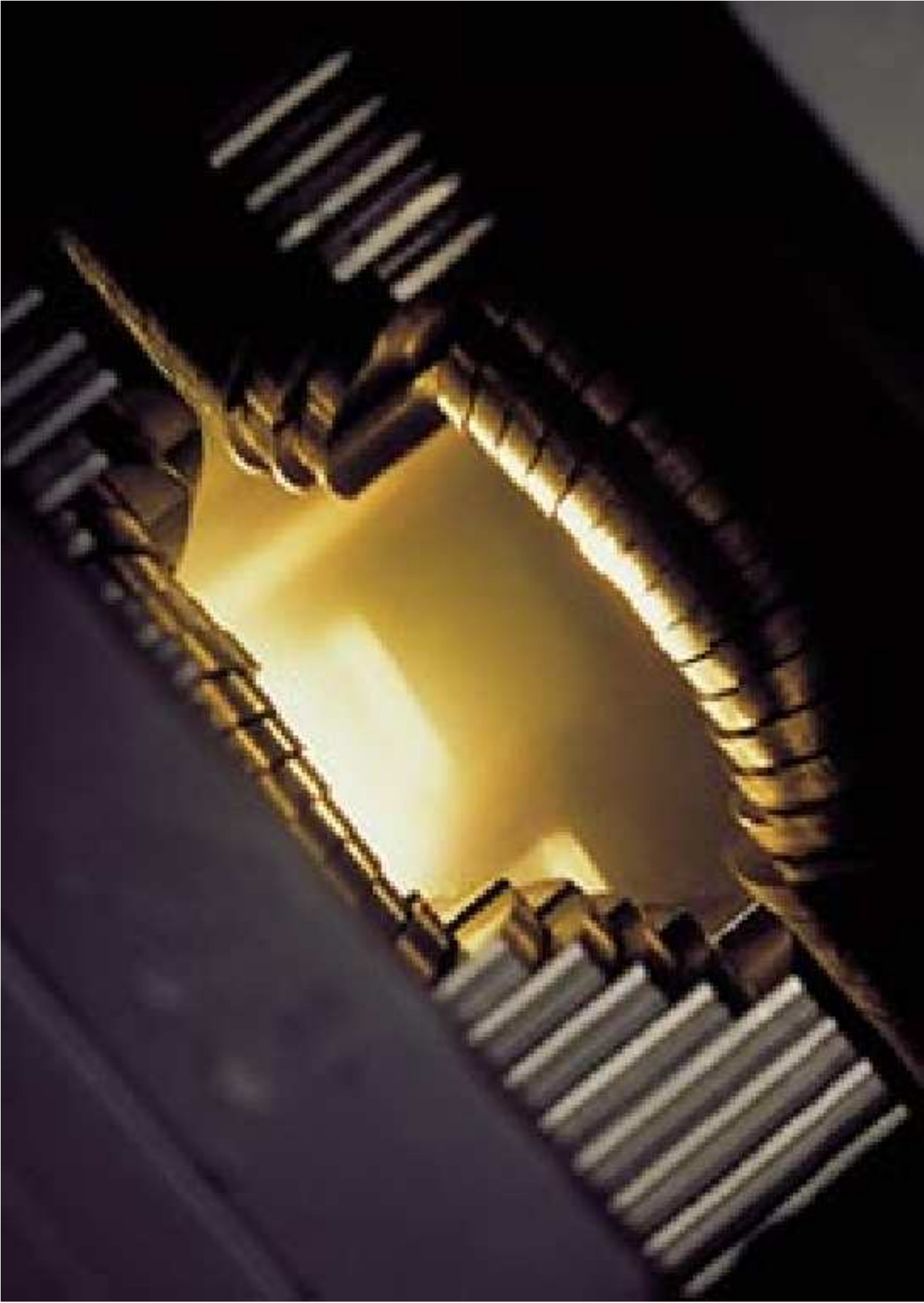}
		\caption{The multileaf collimator (MLC).}
	\label{fig:mlc}
\end{figure}

After the physician has diagnosed the cancer and has located the tumor as well as the organs located in the radiation field, the planning of the radiation therapy sessions is determined in three steps.

In the first step, a number of appropriate beam directions from which radiation will be delivered is chosen \cite{engel_gauer}.

Secondly, the intensity function for each direction is determined. This function is encoded as an integer matrix in which each entry represents an elementary part of the radiation beam (called a {\sl bixel}). The value of each entry is the intensity of the radiation that we want to send through the corresponding bixel.

Finally, the intensity matrix is segmented since the linear accelerator can only send a uniform radiation. This segmentation step mathematically consists in decomposing an $m\times n$ intensity matrix (or {\sl fluence matrix}) $A$ into a nonnegative integer linear combination of certain binary matrices $S=\left(s_{ij}\right)$ that satisfy the {\sl consecutive ones property}. A vector $\mathbf{v} \in \{0,1\}^d$ has the {\sl consecutive ones property}, if $v_{\ell}=1$ and $v_r=1$ for $\ell \leqslant r$ imply $v_j=1$ for all $\ell\leqslant j\leqslant r$. A binary matrix $S$ has the {\sl consecutive ones property}, if each row of $S$ has the consecutive ones property. Such a binary matrix is called a {\sl segment}.

\begin{figure}[ht!]
	\centering
		\includegraphics{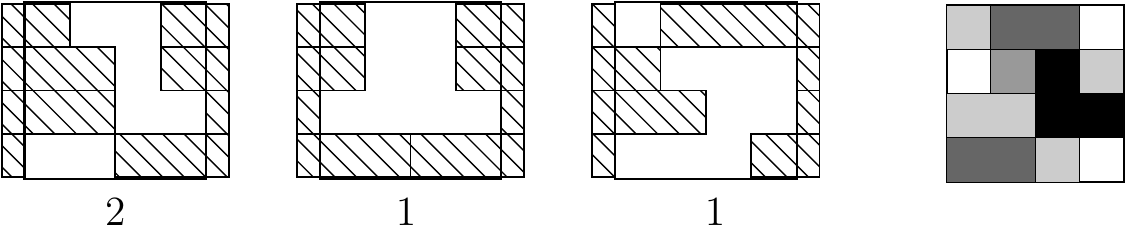}
	\caption{\label{fig:decomp} Leaf positions and irradiation times determining an exact decomposition of a $4 \times 4$ intensity matrix.}
\end{figure}

In this paper, we focus on the case where the MLC is used in the so-called {\sl step-and-shoot} mode, in which the patient is only irradiated when the leaves are not moving. Actually, segments are generated by the MLC and the segmentation step amounts to finding a sequence of MLC positions (see Figure \ref{fig:decomp}). The generated intensity modulated field is just a superposition of homogeneous fields shaped by the MLC.

Throughout the paper, $\left[k\right]$ denotes the set $\left\{1,2,\dots,k\right\}$ for a positive integer $k$, and $\left[\ell,r\right]$ denotes the set $\left\{\ell,\ell+1,\dots,r\right\}$ for positive integers $\ell$ and $r$ with $\ell\leqslant r$. We also allow $\ell=r+1$ where $[\ell,r]= \emptyset$. Thus, an $m \times n$ matrix $S=\left(s_{ij}\right)$ is a segment if and only if there are integer intervals $[\ell_i,r_i]$ for $i \in [m]$ such that
\begin{eqnarray*}
s_{ij}=\begin{cases}
1 & \text{if }j\in[\ell_i,r_i],\\
0 & \text{otherwise.}
\end{cases}
\end{eqnarray*}
A {\sl segmentation} of the intensity matrix $A$ is a decomposition
\begin{eqnarray*}
A=\sum\limits_{j=1}^{k} u_j S_j,
\end{eqnarray*}
where $u_j\in \mathbb{Z}_+$ and $S_j$ is a segment for $j\in \left[k\right]$. The coefficients are required to be integers because in practice the radiation can only be delivered for times that are multiples of a given unit time, called a monitor unit. In clinical applications, a lot of constraints may arise that reduce the number of deliverable segments. For some technical or dosimetric reasons we might look for decompositions where only a subset of all segments is allowed. Those subsets might be explicitely given or defined by constraints like the interleaf collision constraint (denoted by ICC, also called interleaf motion constraint or interdigitation constraint, see \cite{Baatar05}, \cite{Bol04}, \cite{Kal05} and \cite{Kal08}), the interleaf distance constraint (IDC, see \cite{Engelbeen09}), the tongue-and-groove constraint (TGC, see \cite{Bor94}, \cite{Kal08a}, \cite{Kam03a}, \cite{Kam04a}, \cite{Kam04b} and \cite{Que04}), the minimum field size constraint (MFC, see \cite{KiesMFC}), or the minimum separation constraint (MSC, see \cite{Kam03a}).

For some of those constraints, it is still possible to decompose $A$ exactly using the set of feasible segments (like for the ICC, the IDC and the TGC), for others an exact decomposition might be impossible. In this last case, if $\mathcal{S}:=\left\{S_1,\dots,S_k\right\}$ is our set of feasible segments, our aim is to find an approximation $B$ that is decomposable with the segments in $\mathcal{S}$, that satisfies
\begin{equation}
\label{eq:matrix_CVP_constraint}
\| A-B\|_\infty := \max_{i \in [m],\: j \in [n]} |a_{ij}-b_{ij}| \leqslant \bd,
\end{equation}
for some given nonnegative integer constant $\bd$ (possibly, such a matrix $B$ does not exist), and minimizes
\begin{equation}
\label{eq:matrix_CVP_objective}
\| A-B \|_1:=\sum_{i \in [m],\: j \in [n]} |a_{ij}-b_{ij}|.
\end{equation}
The constraint (\ref{eq:matrix_CVP_constraint}) aims at avoiding large bixel-wise differences between target fluence $A$ and realized fluence $B$ (that might lead to undesirable hot spots in the treatment), and the objective (\ref{eq:matrix_CVP_objective}) measures the total change in fluence with respect to the intensity matrix.

Later on in this paper, we will focus on the minimum separation constraint, that imposes a minimum leaf opening $\lambda \in \left[n\right]$ in each open row of the irradiation field. More formally, a segment $S$ given by its leaf positions $(\left[\ell_1,r_1\right],\dots,\left[\ell_m,r_m\right])$ satisfies the minimum separation constraint if and only if $r_i \geqslant \ell_i$ implies $r_i-\ell_i \geqslant \lambda-1$ for all $i \in [m]$. This constraint was first introduced by Kamath, Sahni, Li, Palta and Ranka in \cite{Kam03a}, where the problem of determining if it is possible to decompose $A$ or not under the minimum separation constraint was solved. Here we show that the approximation problem under the minimum separation constraint can be solved in polynomial time with a minimum cost flow algorithm.

The approximation problem described above motivates the definition of the following {\sl Closest Vector Problem (CVP)}. Recall that the {\sl $\ell_\infty$- and $\ell_1$-norms\/} of a vector $\mathbf{x} \in \mathbb{R}^d$ are respectively defined by $\left\|\mathbf{x}\right\|_\infty:=\max_{i \in [d]} \left|x_i\right|$ and $\left\|\mathbf{x}\right\|_1:=\sum^d_{i=1}\left|x_i\right|$. We say that $\bf{x}$ is {\sl binary\/} if $x_i\in \left\{0,1\right\}$ for all $i \in [d]$. The CVP is stated as follows:

\begin{description}
\item[Input:] A collection $\mathcal{G} = \{\mathbf{g}_1,\mathbf{g}_2,\ldots,\mathbf{g}_k\}$ of binary vectors in $\{0,1\}^d$ (the {\sl generators\/}), a vector $\mathbf{a}$ in $\mathbb{Z}_+^d$ (the {\sl target\/} vector), and an upper bound $\bd$ in $\mathbb{Z}_+ \cup \{\infty\}$.

\item[Goal:] Among all vectors $\mathbf{b} := \sum_{j=1}^k u_j\mathbf{g}_j$ with $u_j \in \mathbb{Z}_+$ for $j \in [k]$, find one satisfying $\|\mathbf{a} - \mathbf{b}\|_\infty \leqslant \bd$ and furthermore  minimizing $\|\mathbf{a} - \mathbf{b}\|_1$. If all such vectors $\mathbf{b}$ satisfy $\|\mathbf{a} - \mathbf{b}\|_\infty > \bd$, report that the instance is infeasible.

\item[Measure:] The {\sl total change\/} $\TC := \|\mathbf{a} - \mathbf{b}\|_1$.
\end{description}

We remark that the CVP that is the focus of the present paper differs significantly from the intensively studied CVP on a lattice that is used in cryptography (see, for instance, the recent survey by Micciancio and Regev \cite{MR08}).

In order to cope with the NP-hardness of the CVP, we design (polynomial-time, bi-criteria) approximation algorithms. For the version of the CVP studied here it is natural to consider approximation algorithms with {\sl additive\/} approximation guarantees.

We say that a polynomial-time algorithm is {\sl a $(\Delta_\infty,\Delta_1)$-approximation algorithm} for the CVP if it either proves that the given instance has no feasible solution, or returns a vector $\mathbf{b} = \sum_{j=1}^k u_j \mathbf{g}_j$ with $u_j \in \mathbb{Z}_+$ for $j \in [k]$ such that $\|\mathbf{a}-\mathbf{b}\|_\infty \leqslant \bd + \Delta_\infty$ and $\|\mathbf{a}-\mathbf{b}\|_1 \leqslant \OPT + \Delta_1$, where $\OPT$ is the cost of an optimal solution\footnote{If the instance is infeasible, then we let $\OPT = \infty$.}. Notice that we cannot expect such an approximation algorithm to always either prove that the given instance is infeasible or return a feasible solution, because deciding whether an instance is feasible or not is NP-complete (this claim holds even when $\bd$ is a small constant).

The rest of the paper is organized as follows: Section 2 is devoted to general results on the CVP. We start by observing that the particular case where the generators form a totally unimodular matrix is solvable in polynomial time. We also provide a direct reduction to minimum cost flow when the generators have the consecutive ones property. We afterwards show that, when $\mathcal{G}$ is a general set of generators, for all $\varepsilon > 0$, the CVP admits no polynomial-time $(\Delta_\infty,\Delta_1)$-approximation algorithm with $\Delta_1 \leqslant \left(\ln 2-\varepsilon\right)d$, unless P $=$ NP. (This in particular implies that the CVP is NP-hard.) We conclude the section with an analysis of a natural $(\Delta_\infty,\Delta_1)$-approximation algorithm for the problem based on randomized rounding \cite{MNR97}, with $\Delta_\infty = O(\sqrt{d \ln d}\,)$ and $\Delta_1 = O(d\sqrt{d \ln d}\,)$.

In Section 3, we focus on the particular instances of the CVP arising in IMRT, as described above. We first show, using results of Section 2, that the problem can be solved in polynomial time when the set of generators encodes the minimum separation constraint. We conclude the section with a further hardness of approximation result in case $A$ is a $2\times n$ matrix: for some $\varepsilon > 0$, the problem has no polynomial-time $(\Delta_\infty,\Delta_1)$-approximation algorithm with $\Delta_1 \leqslant \varepsilon\,n$, unless P $=$ NP. (Again, this in particular implies that the corresponding restriction of the CVP is NP-hard.)

In Section 4, we generalize our results to the case where one does not only want to minimize the total change, but a combination of the total change and the sum of the coefficients $u_j$ for $j \in \left[k\right]$. In the IMRT context, this sum represents the total time during which the patient is irradiated, called the {\sl beam-on time}. It is desirable to minimize the beam-on time, for instance, in order to reduce the side effects caused by diffusion of the radiation as much as possible.

Finally, in Section 5, we conclude the paper with some open problems.

%------------------------------------------------------------------
%------------------------------------------------------------------

\section{The closest vector problem}

In this section we consider the CVP in its most general form. We first consider the particular case where the binary matrix formed by the generators is totally unimodular and prove that the CVP is polynomial in this case. We afterwards prove that, for all $\varepsilon > 0$, there exists no polynomial-time $(\Delta_\infty,\Delta_1)$-approximation algorithm for the general case with $\Delta_1 \leqslant \left(\ln2-\varepsilon\right)d$ unless P $=$ NP. We conclude the section by providing a $(\Delta_\infty,\Delta_1)$-approximation algorithm for the CVP, with $\Delta_\infty = O(\sqrt{d\,\ln d}\,)$ and $\Delta_1 = O(d\sqrt{d\,\ln d}\,)$.

%------------------------------------------------------------------

\subsection{Polynomial case} \label{sec:TU}

Consider the following natural LP relaxation of the CVP:
\begin{equationarray}{r@{\quad}rcl@{\qquad}l}
\nonumber \textrm{(LP)} &\multicolumn{3}{@{}l}{\min \sum^{d}_{i=1}(\alpha_{i}+\beta_{i})}\\
\label{eq:(1)}
\mbox{s.t.} &\sum_{j=1}^{k}u_jg_{ij} - \alpha_i + \beta_i &=&a_i& \forall i\in\left[d\right]\\
\label{eq:(3)} &\alpha_i &\geqslant& 0 & \forall i\in\left[d\right]\\[1ex]
\label{eq:(4)} &\beta_i &\geqslant& 0 & \forall i\in\left[d\right]\\[1ex]
\label{eq:(3bis)} &\alpha_i &\leqslant& \bd & \forall i\in\left[d\right]\\[1ex]
\label{eq:(4bis)} &\beta_i &\leqslant& \bd & \forall i\in\left[d\right]\\[1ex]
\label{eq:(2)} &u_j&\geqslant& 0 & \forall j\in\left[k\right].
\end{equationarray}
In this relaxation, the vectors $\boldsymbol{\alpha}$ and $\boldsymbol{\beta}$ model the deviation between the vector $\mathbf{b} := \sum_{j=1}^k u_j \mathbf{g}_j$ and the target vector $\mathbf{a}$. In the IMRT context, $\boldsymbol{\alpha}$ and $\boldsymbol{\beta}$ model the positive and negative differences between realized fluence and target fluence. Clearly, an IP formulation of the CVP can be obtained from (LP) by adding the integrality constraints $u_j \in \mathbb{Z}_+$ for $j \in [k]$.

Let $G$ denote the $d \times k$ binary matrix whose columns are $\mathbf{g}_1$, $\mathbf{g}_2$, \ldots, $\mathbf{g}_k$. If $G$ is totally unimodular, then the same holds for the constraint matrix of (LP). Because $\mathbf{a}$ and $\bd$ are integer, any basic feasible solution of (LP) is integer. Thus, solving the CVP amounts to solving (LP) when $G$ is totally unimodular. Hence, we obtain the following easy result.

\begin{thm} \label{thm:CVP_poly}
The CVP restricted to instances such that the generators form a totally unimodular matrix can be solved in polynomial time.
\end{thm}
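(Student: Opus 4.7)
The plan is to verify that solving (LP) exactly solves the CVP when the generator matrix $G$ is totally unimodular, and to appeal to the polynomial solvability of linear programming.

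First, I would assemble the constraint matrix of (LP) explicitly. Writing the variables in the order $(u_1,\ldots,u_k,\alpha_1,\ldots,\alpha_d,\beta_1,\ldots,\beta_d)$, the equality rows (\ref{eq:(1)}) give a block $[\,G \mid -I_d \mid I_d\,]$, while the upper bound rows (\ref{eq:(3bis)})--(\ref{eq:(4bis)}) give blocks of rows of the form $[\,0 \mid e_i^T \mid 0\,]$ and $[\,0 \mid 0 \mid e_i^T\,]$; the nonnegativity constraints are sign restrictions and do not contribute to the constraint matrix. The key step is to observe that this whole matrix is totally unimodular: I would invoke the standard preservation results for total unimodularity, namely that negating a column preserves TU, that appending columns of an identity matrix preserves TU, and that appending rows of an identity matrix preserves TU. Applying these in turn to $G$ yields that the full constraint matrix of (LP) is TU.

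Next, because $\mathbf{a}$ and $\bd$ are integer, the right-hand side of (LP) is integer, so (by the fundamental theorem linking TU matrices and integer polyhedra) every basic feasible solution of (LP) is integer. The LP can be solved in polynomial time using, e.g., the ellipsoid method or an interior-point method followed by a rounding step to reach a basic optimum. I would then distinguish two cases based on the LP: if (LP) is infeasible, then the CVP is also infeasible and we report this (indeed, any integer CVP solution supplies a feasible LP solution via $\alpha_i:=(a_i-b_i)^+$, $\beta_i:=(a_i-b_i)^-$); if (LP) is feasible, an optimal basic solution $(u,\alpha,\beta)$ is integer, so $u_j\in\mathbb{Z}_+$ and the vector $\mathbf{b}:=\sum_j u_j\mathbf{g}_j$ satisfies $\|\mathbf{a}-\mathbf{b}\|_\infty\leqslant\bd$ thanks to (\ref{eq:(3bis)})--(\ref{eq:(4bis)}), while $\sum_i(\alpha_i+\beta_i)=\|\mathbf{a}-\mathbf{b}\|_1$ is minimized over all integer feasible solutions (since (LP) is a relaxation and the optimum is attained by an integer vector).

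There is no real obstacle in this proof; the only thing to be careful about is the exact assembly of the constraint matrix and the correct application of the TU preservation rules, together with a clean treatment of the infeasible case so that the algorithm matches the problem specification of reporting infeasibility when no $\mathbf{b}$ meets the $\ell_\infty$ bound.
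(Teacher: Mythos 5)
Your proposal is correct and follows essentially the same route as the paper: total unimodularity of $G$ is transferred to the full constraint matrix of (LP) via the standard preservation rules, integrality of $\mathbf{a}$ and $\bd$ then gives integral basic optima, and solving (LP) (or detecting its infeasibility) in polynomial time solves the CVP. The only nitpick is a harmless sign slip in the infeasibility direction: with the constraint $\sum_j u_j g_{ij} - \alpha_i + \beta_i = a_i$ one should take $\beta_i := (a_i - b_i)^+$ and $\alpha_i := (a_i - b_i)^-$, i.e.\ the roles of $\alpha_i$ and $\beta_i$ are swapped relative to what you wrote.
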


\subsection{Minimum cost flow problem}

In this section, we assume that the generators satisfy the consecutive ones property. In particular, $G$ is totally unimodular. This case is of special interest, because it corresponds to the one row case of the segmentation problem in the IMRT context. We show that it is not necessary to solve an LP and provide a direct reduction to the minimum cost flow problem.

We begin by appending a row of zeroes to the matrix $G$ and vector $\mathbf{a}$. Similarly, we add an extra row to the vectors $\boldsymbol{\alpha}$ and $\boldsymbol{\beta}$. Thus the matrix and the vectors now have $d+1$ rows. Next, we replace \eqref{eq:(1)} by an equivalent set of equations: We keep the first equation, and replace each other equation by the difference between this equation and the previous one. Because the resulting constraint matrix is the incidence matrix of a network, we conclude that (LP) actually models a minimum cost network flow problem. We give more details below.

We denote the generators by $\mathbf{g}_{\ell,r}$ where $\left[\ell,r\right]$ is the interval of ones of this generator. That is, $\mathbf{g} = \mathbf{g}_{\ell,r}$ if and only if $g_i = 1$ for $i\in [\ell,r]$ and $g_i = 0$ otherwise. Let $\mathcal{I}$ be the set of intervals such that $\mathcal{G}=\left\{\mathbf{g}_{\ell,r}\;\left|\;\left[\ell,r\right]\in \mathcal{I}\right.\right\}$. We assume that there is no generator with an empty interval of ones (that is, $\ell\leqslant r$ always holds). Now, let $D$ be the network whose set of nodes and set of arcs are respectively defined as:
\begin{eqnarray*}
V(D)&:=&[d+1] = \{1,2,\ldots,d+1\}, \quad \mathrm{and}\\
A(D)&:=&\big\{(i,i+1) \ | \ i \in [d] \big\} \cup \big\{(i+1,i) \ | \ i \in [d] \big\} \cup \big\{(\ell,r+1) \ | \ \left[\ell,r\right] \in \mathcal{I} \big\}.
\end{eqnarray*}
Let us notice that parallel arcs can appear when the interval of a generator only contains one element. In such a case, we keep both arcs: the one representing the generator and the other one.

\begin{figure}[ht!]
\centering
  \includegraphics{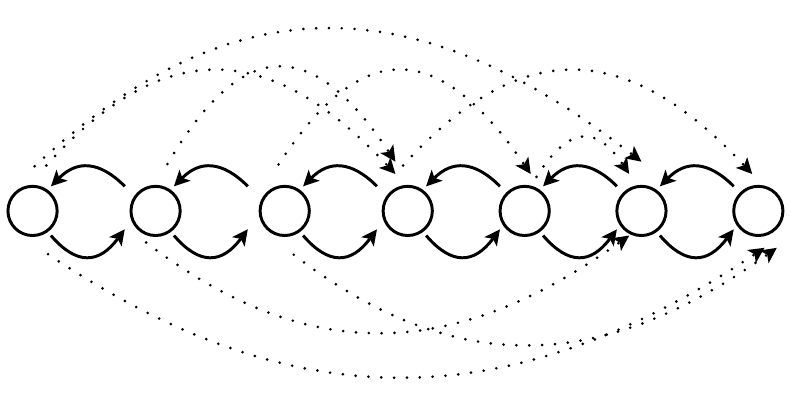}
  \caption{The network for an instance with $d = 6$ and $k = 9$.}
  \label{fig:one_row}
\end{figure}

Letting $a_0 := 0$, we define the demand of each node $j\in V(D)$ as $a_{j-1}-a_{j}$. The arcs of type $(j,j+1)$ and $(j+1,j)$ have capacity $\bd$ and cost $1$. The other arcs, that is, the arcs corresponding to the generators, have infinite capacity and cost $0$. An example of the network is shown in Figure \ref{fig:one_row}. If we consider a flow $\varphi$ in the network, we have the following correspondence between the flow values and the variables of the LP:
\begin{align*}
\varphi(\ell,r+1)&=u_{\ell,r} \text{ for all } [\ell,r] \in \mathcal{I},\\
\varphi(i,i+1)&=\beta_i \text{ for all } i \in [d],\\
\varphi(i+1,i)&=\alpha_i \text{ for all } i \in [d].
\end{align*}
From the discussion above, we obtain the following result.

\begin{prop}\label{poly_CVP_C1}
Let $\mathcal{G}$ and $D$ be as above, let $\mathbf{a} \in \mathbb{Z}_+^d$ and let $\bd \in \mathbb{Z}_+ \cup \{\infty\}$, and let $\OPT$ denote the optimal value of the corresponding CVP instance. Then, $\OPT$ equals the minimum cost of a flow in $D$.
\end{prop}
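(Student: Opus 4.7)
The plan is to show that the LP relaxation of the CVP introduced in Section~2.1, specialized to the consecutive-ones case, can be rewritten (via elementary row operations on the equality system) as the LP of a minimum cost flow problem on the network $D$, and then to invoke the fact that integer basic feasible solutions exist because the constraint matrix becomes the incidence matrix of a digraph. Since the LP value equals the IP value (as already used for Theorem~\ref{thm:CVP_poly}, since the consecutive-ones matrix is totally unimodular), this will identify $\OPT$ with the minimum cost of a flow in $D$.

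Concretely, I would first append a zero row to the matrix $G$ and to the target vector $\mathbf{a}$, and introduce dummy coordinates $\alpha_{d+1},\beta_{d+1}$ (which may be fixed to zero), so that all sums in \eqref{eq:(1)} run over $i\in[d+1]$. Next, I would keep the first equation and replace each subsequent equation $i\geqslant 2$ by the difference of equations $i$ and $i-1$. A direct inspection then shows that the column of $u_{\ell,r}$, originally the $0/1$ indicator of $[\ell,r]$, becomes a vector with a single $+1$ in row $\ell$ and a single $-1$ in row $r+1$, i.e.\ exactly the incidence vector of the arc $(\ell,r+1)$. Similarly, the column of $\beta_i$ (originally $+1$ only in row $i$) becomes $+1$ in row $i$ and $-1$ in row $i+1$, the incidence vector of $(i,i+1)$, and the column of $\alpha_i$ becomes the incidence vector of $(i+1,i)$.

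It remains to check that the right-hand side, objective, and bounds match the flow problem. The new right-hand side in row $j$ is $a_j - a_{j-1}$ (with $a_0:=0$ and $a_{d+1}:=0$), so the node equations read ``net out-flow at $j$ equals $a_j-a_{j-1}$'', i.e.\ demand $a_{j-1}-a_j$ at $j$, as prescribed. The objective $\sum_i(\alpha_i+\beta_i)$ is the cost of a flow when the arcs $(i,i+1)$ and $(i+1,i)$ have unit cost and the generator arcs have cost zero, and the bounds \eqref{eq:(3bis)},\eqref{eq:(4bis)} together with \eqref{eq:(2)} translate into capacity $\bd$ on the former and infinite capacity on the latter. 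Thus (LP) is the flow LP on $D$, and the correspondence between variables and arc flows is the one stated in the excerpt.

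Finally, since the incidence matrix of $D$ is totally unimodular and all demands and capacities are integer, the flow LP admits an integer optimal solution whenever it is feasible, which in turn yields an integer optimal solution of the CVP with the same value; and if the flow LP is infeasible then no integer $\mathbf{u}\geqslant 0$ can produce a $\mathbf{b}$ with $\|\mathbf{a}-\mathbf{b}\|_\infty\leqslant\bd$, so $\OPT=\infty$ in accordance with the convention of the excerpt. The only delicate point I expect is bookkeeping: making sure the signs of the row operations, the incidence conventions for arcs, and the direction of the demand convention are all compatible so that the three translations (variables $\leftrightarrow$ arcs, right-hand side $\leftrightarrow$ demands, bounds $\leftrightarrow$ capacities) fit together without a stray sign.
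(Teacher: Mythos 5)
Your proposal is correct and follows essentially the same route as the paper: it rewrites the LP relaxation by appending a zero row and taking successive differences of the equality constraints, identifies the resulting columns with incidence vectors of the arcs of $D$ (generator arcs, forward arcs for $\beta$, backward arcs for $\alpha$), matches right-hand sides, costs and capacities with the demands, unit/zero costs and capacity $\bd$, and then invokes total unimodularity to pass from the flow LP to the integer CVP optimum. The only difference is one of presentation: you make the integrality and infeasibility bookkeeping explicit, which the paper leaves implicit in the surrounding discussion.
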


Our network $D$ is similar to the network used in \cite{ahuja} for finding exact unconstrained decompositions. There, the arcs of type $(i,i+1)$ and $(i+1,i)$ modeling the total change are missing and the arcs of type $(\ell,r+1)$ are available for all nonempty intervals $[\ell,r]$.

%------------------------------------------------------------------

\subsection{Hardness}\label{sec:hardness}

In this subsection we prove that the CVP is NP-hard to approximate within an additive error of at most $(\ln 2-\varepsilon)d$, for all $\varepsilon > 0$. To prove this, we consider the particular case where $\mathbf{a}$ is the all-one vector. The given set $\mathcal{G}$ is formed of $k$ binary vectors
$\mathbf{g}_1$, $\mathbf{g}_2$, \dots, $\mathbf{g}_k$. Because $\mathbf{a}$ is binary,
the associated coefficients $u_j$ for $j\in \left[k\right]$ can be assumed to be binary as well. %Indeed, let us assume that $u_j>1$ for $j\in[k]$ in a solution. For an index $i$ such that $g_{ij}=0$ the value of $b_i$ will not change if we decrease $u_j$ to 1. If $g_{ij}=1$, hence $b_i>1$ will decrease by one unit if we decrease $u_j$ to 1 and the value of the objective $\left\|\mathbf{a}-\mathbf{b}\right\|_1$ will decrease by one unit.

For our hardness results, we need a special type of satisfiability problem. A {\sl 3SAT-6 formula} is a conjunctive normal form (CNF formula) in which every clause contains exactly three literals, every literal appears in exactly three clauses and a variable appears at most once in each clause. This means that each variable appears three times negated and three times unnegated. Such a formula is said to be {\sl $\delta$-satisfiable} if at most a $\delta$-fraction of its clauses are satisfiable.

As noted by Feige, Lov\'asz and Tetali \cite{Feige et al}, the following result is a consequence of the PCP theorem (see Arora, Lund, Motwani, Sudan and Szegedy \cite{Arora98}).

\begin{thm}[\cite{Feige et al}]\label{thm:3SAT-6}
There is some $0<\delta<1$, such that it is NP-hard to distinguish between a satisfiable 3SAT-6 formula and one which is $\delta$-satisfiable.
\end{thm}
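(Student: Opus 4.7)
The plan is to derive this as a gap-preserving consequence of the PCP theorem in the form of hardness of \textsc{Max-E3Sat}. By the PCP theorem of Arora, Lund, Motwani, Sudan, and Szegedy, there exists a constant $\delta_0 \in (0,1)$ and a polynomial-time reduction from \textsc{3Sat} to \textsc{Max-E3Sat} (CNF formulas in which every clause contains exactly three literals on distinct variables) that maps satisfiable instances to satisfiable instances and unsatisfiable instances to instances in which no assignment satisfies more than a $\delta_0$-fraction of the clauses. The goal is then to upgrade such a \textsc{Max-E3Sat} instance $\varphi$ into a 3SAT-6 instance $\varphi'$ of regular structure, while losing only a constant factor in the gap.

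First I would reduce to bounded variable-occurrence instances by a standard expander-based replacement. For each variable $x$ of $\varphi$ that occurs $\ell$ times, introduce $\ell$ fresh copies $x^{(1)},\ldots,x^{(\ell)}$, and let the $i$-th literal occurrence of $x$ in $\varphi$ be replaced by the corresponding copy $x^{(i)}$. To enforce consistency among the copies, take a constant-degree expander graph $H_\ell$ on vertex set $[\ell]$ and, for each edge $\{i,j\}$ of $H_\ell$, add the implication pair $(\neg x^{(i)} \vee x^{(j)})$ and $(x^{(i)} \vee \neg x^{(j)})$ encoding $x^{(i)} = x^{(j)}$. These are 2-clauses; pad each of them with a fresh dummy literal $y$ (and use additional clauses that force $y=0$) so that every clause has exactly three literals on distinct variables. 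By the expander mixing lemma, any assignment that is not essentially constant on $\{x^{(1)},\ldots,x^{(\ell)}\}$ violates a constant fraction of the consistency clauses; rounding each cluster to its majority value then changes the number of satisfied original clauses by at most a controlled amount, so the soundness gap survives (after scaling) as a new constant $\delta_1 \in (0,1)$, and now each variable appears only a constant number of times.

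Next I would regularize the occurrences to the precise structure of a 3SAT-6 formula, namely three positive and three negative occurrences per variable and three literals per clause. The second is already ensured above. For the first, I would normalize occurrences per variable by again introducing copies glued through an expander and, if necessary, add further ``dummy'' clauses that pad occurrences in a balanced way (adding equal numbers of positive and negative copies of each literal inside tautological or trivially satisfiable 3-clauses so that the optimum is unaffected and the gap is preserved). One must also ensure that a variable does not appear twice in the same clause, which can be arranged by relabeling using distinct copies whenever a clash arises, at the cost of a constant blow-up. A careful accounting shows that the final instance $\varphi'$ is a 3SAT-6 formula of size polynomial in $|\varphi|$, that $\varphi$ satisfiable implies $\varphi'$ satisfiable, and that if no assignment satisfies more than a $\delta_0$-fraction of the clauses of $\varphi$, then no assignment satisfies more than a $\delta$-fraction of the clauses of $\varphi'$ for some constant $\delta < 1$ depending only on $\delta_0$ and the expansion parameter.

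I expect the main obstacle to be the quantitative gap analysis of the expander-based consistency gadget: one has to argue that any assignment that deviates substantially from being constant on each variable's cluster necessarily violates a number of consistency clauses proportional to the deviation, and simultaneously that forcing the cluster to a single value changes the number of satisfied original clauses by a smaller controlled amount. All other parts of the construction (padding to three literals, equalizing positive/negative occurrences, avoiding repeated variables in a clause) are mechanical once the gap-preserving degree reduction is in hand. The final statement then follows by composing the PCP-based hardness of \textsc{Max-E3Sat} with this regularizing reduction.
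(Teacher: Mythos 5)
The paper itself gives no proof of this statement: it is imported from Feige--Lov\'asz--Tetali as a known consequence of the PCP theorem, so there is no internal argument to compare against. Your sketch reconstructs the standard literature route (PCP-hardness of gap Max-E3SAT, then expander-based degree reduction and regularization to a bounded-occurrence, regular formula), which is indeed the provenance of the result, and the core of your argument --- consistency clauses along a constant-degree expander plus majority rounding inside each cluster of copies --- is the right mechanism.

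Two of your padding devices, however, do not work as written. First, you cannot ``force $y=0$'' with clauses having exactly three literals on distinct variables: every such clause is satisfied once $y=1$, so in the soundness case an assignment can set the dummy to true and satisfy all padded consistency clauses for free, and the accounting of the gap breaks; moreover, a dummy shared by many padded clauses has unbounded occurrence, defeating the very regularity you are trying to achieve. The standard repair is to replace each 2-clause $(x \vee y)$ by the pair $(x \vee y \vee z)$ and $(x \vee y \vee \neg z)$ with a fresh $z$ per clause, which simulates the 2-clause up to a constant factor in the gap while keeping all occurrence counts bounded. Second, ``tautological'' padding clauses necessarily contain some variable both positively and negatively, which violates the 3SAT-6 requirement that a variable appear at most once per clause; balanced padding must instead use clauses over fresh variables that are satisfiable under an intended global assignment, with the resulting dilution of the gap tracked explicitly. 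Finally, hitting the exact pattern of three positive and three negative occurrences per variable is not just ``careful accounting'': it needs a concrete local gadget (as in Feige's construction), and this is precisely the part your sketch leaves unspecified. With these repairs your outline coincides with the known proof; as stated, the two padding steps are genuine, though local and fixable, errors.
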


By combining the above theorem and a reduction due to Feige \cite{Feige} one gets the following result (see Feige, Lov\'asz and Tetali \cite{Feige et al} and also Cardinal, Fiorini and Joret \cite{Sam}).

\begin{lem}[\cite{Sam,Feige et al}]\label{lem:reduction}
For any given constants $c>0$ and $\xi>0$, there is a polynomial time reduction associating to any 3SAT-6 formula $\Phi$ a corresponding set system $\mathbf{\mathcal{S}}(\Phi)=(V,\mathscr{S})$ with the following properties:
\begin{itemize}
\item The sets of $\mathscr{S}$ all have the size $d/t$, where $d=|V|$ and $t$ can be assumed to be arbitrarily large.
\item If $\Phi$ is satisfiable, then $V$ can be covered by $t$ disjoint sets of $\mathscr{S}$.
\item If $\Phi$ is $\delta$-satisfiable, then every $x$ sets chosen from $\mathscr{S}$ cover at most a $1-\left(1-\frac{1}{t}\right)^{x}+\xi$ fraction of the points, for $1\leqslant x\leqslant ct$.
\end{itemize}
\end{lem}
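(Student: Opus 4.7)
The plan is to follow Feige's classical reduction from 3SAT to set cover, in the form refined by Feige, Lov\'asz and Tetali, whose heart is the combination of a combinatorial \emph{partition system} with a two-prover one-round interactive proof for the 3SAT-6 formula $\Phi$.

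The first ingredient is a partition system $P=P(m,L,t,\xi)$: a ground set $B$ of size $m$ together with $L$ partitions $\mathcal{P}^{(1)},\ldots,\mathcal{P}^{(L)}$ of $B$, each into $t$ blocks of equal size $m/t$, such that for every $x\leqslant ct$ and every choice of $x$ blocks coming from $x$ \emph{distinct} partitions, their union covers at most a $1-(1-1/t)^x+\xi$ fraction of $B$. I would construct $P$ probabilistically: draw each partition by assigning each point to one of $t$ classes uniformly and independently, and take a union bound over all $\binom{L}{x}t^x$ ways of selecting $x$ blocks from distinct partitions. Choosing $m$ polynomial in $L$ and $1/\xi$ (and $t$ a slowly growing function) suffices for the bad event to have probability strictly less than $1$.

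Next I would construct $(V,\mathscr{S})$ from $\Phi$. View the standard 2-prover protocol for 3SAT-6: the verifier picks a clause $C$ of $\Phi$ uniformly at random, then picks one of its three variables $v$ uniformly at random; the clause prover answers with one of the $7$ satisfying assignments to $C$ and the variable prover answers with a value in $\{0,1\}$ for $v$, and the verifier accepts iff the answers agree on $v$. For each clause $C$ of $\Phi$, attach an independent copy $P_C$ of the partition system $P$ with $L=7$ partitions, indexed by the satisfying assignments of $C$; let $V$ be the disjoint union of all the copies $P_C$. For each variable $v$ and each value $b\in\{0,1\}$, define $S_{v,b}\in\mathscr{S}$ as follows: for each of the three clauses $C$ containing $v$ and each satisfying assignment $a$ of $C$ with $a(v)=b$, include the block of $\mathcal{P}^{(a)}$ in $P_C$ indexed by the position of $v$ inside $C$. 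By choosing $t$ equal to the number of \emph{slots} per variable-in-clause pair, a mechanical count shows every $S_{v,b}$ has the same cardinality $d/t$ with $d=|V|$.

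I would then verify the three claimed properties. The uniform set size is built in. For completeness, given a satisfying assignment $\tau$ of $\Phi$, the $t$ sets of the form $S_{v,\tau(v)}$, ranged over a cleverly chosen system of distinct representatives for the $t$ slots, are disjoint and cover every partition of every $P_C$ exactly once, hence cover $V$. For soundness, suppose $\Phi$ is only $\delta$-satisfiable and that some family of $x\leqslant ct$ sets $S_{v_1,b_1},\ldots,S_{v_x,b_x}$ covers more than a $1-(1-1/t)^x+\xi$ fraction of $V$; then on an average copy $P_C$ the coverage exceeds the partition-system bound, so by the property of $P$ there must be two sets $S_{v_i,b_i}$ and $S_{v_j,b_j}$ contributing blocks from the \emph{same} partition of $P_C$, i.e.\ agreeing on the clause assignment they select. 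Extracting a strategy for the two provers from this would satisfy a clause fraction strictly larger than $\delta$, contradicting Theorem \ref{thm:3SAT-6}. The real obstacle, and the only part that requires care, is the simultaneous tuning of the parameters $t,L,m,\xi,c,\delta$ so that completeness, the partition-system soundness, the 2-prover gap, and polynomial running time are all reconciled; this is precisely where one needs the quantitative PCP-based statement of Theorem \ref{thm:3SAT-6} together with a sufficiently powerful (but still polynomial-sized) partition system.
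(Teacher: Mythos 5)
The paper itself does not prove this lemma: it is quoted verbatim from Feige--Lov\'asz--Tetali and Cardinal--Fiorini--Joret, whose proof is Feige's set-cover reduction built from a \emph{multi}-prover proof system obtained by $\ell$-fold parallel repetition, combined with partition systems whose partitions are indexed by answer tuples and whose $k$ blocks are indexed by the $k$ provers. Your sketch has the right flavour (partition systems plus a proof system for $\Phi$), but the concrete construction you describe cannot deliver the three stated properties. First, the parameter $t$: in your construction the blocks of each partition of the copy $P_C$ are indexed by the position of the variable inside $C$, so each partition has only $3$ blocks; the quantity playing the role of $t$ in the soundness bound is therefore fixed at $3$ and cannot be ``assumed to be arbitrarily large''. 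In the actual proof $t$ equals (number of provers)$\times$(number of questions per prover) and is made large by increasing the number of provers/repetitions -- there is nothing in the basic clause--variable game that lets $t$ grow. Second, the basic two-prover one-round protocol has soundness error bounded away from $0$ by a constant, whereas the third property with $\xi$ arbitrarily small and $1\leqslant x\leqslant ct$ for large $c,t$ requires a proof system whose error can be driven below any constant (Raz's parallel repetition theorem, or Feige's $k$-prover analysis); your extraction step ``two sets hit the same partition on an average copy, hence a strategy satisfying more than a $\delta$ fraction'' loses exactly this quantitative control, since with only $7$ partitions per copy and up to $ct$ chosen sets such collisions are essentially unavoidable and imply nothing about $\delta$-satisfiability.

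Third, the set system itself violates the first two bullets. The sets $S_{v,b}$ do not all have the same cardinality: for a clause $C$ containing $v$, the number of satisfying assignments $a$ of $C$ with $a(v)=b$ is $4$ if $b$ satisfies $v$'s literal in $C$ and $3$ otherwise, so $|S_{v,b}|$ depends on the signs of $v$'s occurrences. And in the completeness case the family $\{S_{v,\tau(v)}\}$ consists of $s$ sets (one per variable), not $t$, and these sets are not disjoint, because a single $S_{v,b}$ collects blocks from several \emph{different} partitions of the same copy $P_C$, and blocks of different partitions of one ground set overlap. In the correct construction each set contributes to each relevant copy exactly one block of exactly one partition (the one indexed by the answer), which is what makes the honest cover a perfect tiling by $t=k\lvert Q\rvert$ disjoint sets of size $d/t$. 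So the proposal, as written, has genuine gaps in all three properties; repairing it essentially forces you back to the Feige/FLT construction with $k$ provers and parallel repetition, which is the proof the cited references give.
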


\begin{thm}\label{hard_approx}
For all $\varepsilon > 0$, there exists no polynomial-time $(\Delta_\infty,\Delta_1)$-approximation algorithm for the CVP with $\Delta_1 \leqslant \left(\ln 2-\varepsilon\right)d \approx (0.693 -\varepsilon)d$, unless P $=$ NP.
\end{thm}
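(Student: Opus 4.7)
The plan is to give a gap-preserving reduction from distinguishing satisfiable from $\delta$-satisfiable 3SAT-6 formulas (Theorem \ref{thm:3SAT-6}) to approximating the CVP, using the Feige-style set system of Lemma \ref{lem:reduction}. Given a 3SAT-6 formula $\Phi$ and constants $c,\xi>0$ to be fixed later, apply Lemma \ref{lem:reduction} to obtain a set system $(V,\mathscr{S})$ with $|V|=d$ and every $S\in\mathscr{S}$ of size $d/t$, where $t$ can be assumed arbitrarily large. Form the CVP instance of dimension $d$ with target vector $\mathbf{a}=\mathbf{1}$, upper bound $\bd=\infty$, and one generator per $S\in\mathscr{S}$, namely the characteristic vector of $S$. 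If $\Phi$ is satisfiable, the $t$ disjoint sets that partition $V$, each taken with multiplicity $1$, yield $\mathbf{b}=\mathbf{1}$, so $\OPT=0$.

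For the NO case I would lower-bound $\TC$ over all candidates $\mathbf{b}$ obtained from coefficients $u_S\in\mathbb{Z}_+$. Setting $x:=\sum_S u_S$ and $y:=\lvert\{i\in[d]:b_i=0\}\rvert$ and splitting $\|\mathbf{a}-\mathbf{b}\|_1$ according to whether $b_i=0$ or $b_i\geq 1$, one obtains
\[
\TC=y+\sum_{i:b_i\geq 1}(b_i-1)=2y+\sum_i b_i-d=2y+\tfrac{xd}{t}-d,
\]
where the last equality uses $\sum_i b_i=\sum_S u_S|S|=xd/t$. For $x\leq ct$, the third item of the lemma gives $y\geq\bigl((1-1/t)^x-\xi\bigr)d$, hence $\TC\geq d\bigl(2(1-1/t)^x+x/t-1-2\xi\bigr)$; elementary calculus shows that the right-hand side is minimized when $(1-1/t)^x\approx 1/2$, i.e.\ $x\approx t\ln 2$, with minimum value tending to $(\ln 2-2\xi)d$ as $t\to\infty$. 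For $x>ct$ the trivial bound $y\geq 0$ already yields $\TC\geq(c-1)d$, which exceeds $(\ln 2)d$ once $c>1+\ln 2$.

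Picking $c>1+\ln 2$, $\xi$ sufficiently small and $t$ sufficiently large then yields $\OPT\geq(\ln 2-\varepsilon/2)d$ whenever $\Phi$ is $\delta$-satisfiable. A purported polynomial-time $(\Delta_\infty,\Delta_1)$-approximation algorithm with $\Delta_1\leq(\ln 2-\varepsilon)d$ would return in the YES case a vector with $\TC\leq\Delta_1\leq(\ln 2-\varepsilon)d$, and in the NO case a vector with $\TC\geq\OPT>(\ln 2-\varepsilon)d$; comparing the returned total change to the threshold $(\ln 2-\varepsilon)d$ thus distinguishes the two in polynomial time, contradicting Theorem \ref{thm:3SAT-6} unless P $=$ NP. The main obstacle is the calibration: one must verify that the minimizer $x^*\approx t\ln 2$ of the NO-case lower bound lies inside the lemma's admissible range $x\leq ct$, and that the combined slack coming from $\xi$, from the finite-$t$ discrepancy between $(1-1/t)^x$ and $e^{-x/t}$, and from the $x>ct$ regime can all be absorbed into an arbitrarily small additive $\varepsilon d$.
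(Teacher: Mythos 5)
Your proposal is correct and follows essentially the same route as the paper's proof: the same reduction via Lemma \ref{lem:reduction} with $\mathbf{a}=\mathbf{1}$, the same lower bound $\TC\geqslant d\bigl(x/t+2(1-1/t)^x-1-2\xi\bigr)$ in the regime $x\leqslant ct$, the same convex minimization giving the limit $\ln 2$, and a large-$x$ case handled by choosing $c$ a constant larger than $1+\ln 2$. The calibration issue you flag is not a real obstacle: the lemma is only invoked for $x\leqslant ct$, and minimizing the resulting expression over all real $x\geqslant 0$ (as the paper does, with l'Hospital for the $t\to\infty$ limit) is a valid relaxation, so whether the minimizer lies below $ct$ is immaterial.
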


\begin{proof}
We define a reduction from 3SAT-6 to the CVP. We use Lemma \ref{lem:reduction} to obtain
a reduction from 3SAT-6 to CVP (by identifying subsets with their characteristic binary vectors) with the following properties. For any given constants $c>0$ and $\xi>0$,
it is possible to set the values of the parameters of the reduction in such a way that:
\begin{itemize}
\item The generators from $\mathcal{G}$ have all the same number $\frac{d}{t}$ of ones, where $t$ can be assumed to be larger than any given constant.
\item If the 3SAT-6 formula $\Phi$ is satisfiable, then $\mathbf{a}$ can be exactly decomposed as a sum of $t$ generators of $\mathcal{G}$.
\item If the 3SAT-6 formula $\Phi$ is $\delta$-satisfiable, then the support of any linear combination of $x$ generators chosen from $\mathcal{G}$ is of size at most $d \left(1-\left(1-\frac{1}{t}\right)^{x}+\xi\right)$, for $1\leqslant x\leqslant ct$.
\end{itemize}

From what precedes, if $\Phi$ is satisfiable then the CVP instance is feasible and $\OPT = 0$. We claim that if $\Phi$ is $\delta$-satisfiable, then \textsl{any} approximation $\mathbf{b} := \sum_{j=1}^k {u_j \mathbf{g}_j}$ with $u_j \in \mathbf{Z}_+$ for $j \in [k]$ has total change $\TC := ||\mathbf{a} - \mathbf{b}||_1 > d\left(\ln 2-\varepsilon\right)$, provided $t$ is large enough and $\xi$ is small enough (this is proved below).

The claim implies the theorem, for the following reason. Assume there exists a polynomial-time $(\Delta_\infty,\Delta_1)$-approximation algorithm with $\Delta_1 \leqslant \left(\ln 2-\varepsilon\right)d$ for the CVP with some nonnegative integer bound $\bd$. Moreover, assume that we are given a 3SAT-6 formula that is either satisfiable or $\delta$-satisfiable.

The approximation algorithm either declares the instance given by the reduction to be infeasible or provides an approximation $\mathbf{b}$. In the first case, we can conclude that $\Phi$ is not satisfiable, hence $\delta$-satisfiable. In the latter case, we compare the total change $\TC$ of the solution returned by the algorithm to $\left(\ln 2-\varepsilon\right)d$. If $\TC \leqslant \left(\ln 2-\varepsilon\right)d$ then the claim implies that $\Phi$ is satisfiable. If $\TC > \left(\ln 2-\varepsilon\right)d$ then we can conclude that $\Phi$ is not satisfiable, hence $\delta$-satisfiable, because otherwise the CVP instance would be feasible with $\OPT = 0$ and the approximation returned by the algorithm should satisfy $\TC \leqslant 0 + \Delta_1 \leqslant \left(\ln 2-\varepsilon\right)d$. In conclusion, we could use the algorithm to decide if $\Phi$ is satisfiable or $\delta$-satisfiable in polynomial time. By Theorem \ref{thm:3SAT-6}, this would imply P $=$ NP, contradiction.

Now, we prove the claim. Notice that we may assume that $u_j \in \{0,1\}$ for all $j \in [k]$. Let $x$ be denote the number of coordinates $u_j$ that are nonzero. We distinguish three cases.
\begin{itemize}
\item \textbf{Case 1: $x=0$.}\medskip

In this case $\TC = d > (\ln 2 - \varepsilon)d$. 
\item \textbf{Case 2 : $1\leqslant x \leqslant ct$.}\medskip

Let $\rho$ denote the number of components $b_i$ of $\mathbf{b}$ that are nonzero. Thus $d-\rho$ is the number of $b_i$ equal to 0. The total change of $\mathbf{b}$ includes one unit for each component of $\mathbf{b}$ that is zero and a certain number of units caused by components of $\mathbf{b}$ larger than one. More precisely, we have:
\begin{eqnarray}
\nonumber\TC&=& d-\rho+x\frac{d}{t}-\rho\\
\nonumber&\geqslant& d\left(\frac{x}{t}+1\right)-2d\left(1-\left(1-\frac{1}{t}\right)^{x}+\xi\right)\\
\nonumber&=&d\left(\frac{x}{t}+2\left(1-\frac{1}{t}\right)^{x}-1-2\xi\right)\\
\nonumber&=&d\left(\left(1-\beta\right)x+2\beta^x-1-2\xi\right),
\end{eqnarray}
where $\beta:=1-\frac{1}{t}$. Note that $\beta<1$ and taking $t$ large corresponds to taking $\beta$ close to 1. In order to derive the desired lower bound on the total change of $\mathbf{b}$ we now study the function $f(x):=\left(1-\beta\right)x+2\beta^x$. The first derivative of $f$ is
\begin{eqnarray*}
f'(x)&=&\left(1-\beta\right)+2\ln\beta\cdot\beta^x.
\end{eqnarray*}
It is easy to verify (since the second derivative of $f$ is always positive) that $f$ is convex and attains its minimum at
\[
x_{\min}=\frac{1}{\ln \beta}\cdot \ln\left(\frac{\beta-1}{2\ln \beta}\right)
\]
Hence we have, for all $x>0$,
\begin{eqnarray*}
f(x)&\geqslant & f(x_{\min})\\
&=&\left(1-\beta\right)x_{\min}+2\beta^{x_{\min}}\\
&=&\left(1-\beta\right)\cdot\frac{1}{\ln \beta}\cdot\ln\left(\frac{\beta-1}{2\ln\beta}\right)+\frac{\beta-1}{\ln \beta}\\
&=&\frac{\beta-1}{\ln \beta}\left(\ln \left(\frac{2\ln \beta}{\beta-1}\right)+1\right).
\end{eqnarray*}
By l'Hospital's rule,
\[
\lim_{\beta\rightarrow 1}\frac{\left(\beta-1\right)}{\ln\beta}=1,
\]
hence we have
\[
f(x)\geqslant\ln 2+1+2\xi-\varepsilon
\]
for $t$ sufficiently large and $\xi$ sufficiently small, which implies
\[
\TC \geqslant d\left(\ln 2-\varepsilon\right).
\]
\item \textbf{Case 3: $x>ct$.}\medskip

Let again $\rho$ be the number of components $b_i$ of $\mathbf{b}$ that are nonzero.
The first $ct$ generators used by the solution have some common nonzero entries. By taking into account the penalties caused by components of $\mathbf{b}$ larger than one, we have:
\begin{eqnarray*}
\TC&\geqslant & ct\cdot\frac{d}{t}-d\left(1-\left(1-\frac{1}{t}\right)^{ct}+\xi\right)\\
&=&d\left(c-1+\left(1-\frac{1}{t}\right)^{ct}-\xi\right)\\
&\geqslant & d\left(\ln2-\varepsilon\right).
\end{eqnarray*}
The last inequality holds for $t$ sufficiently large and $\xi$ sufficiently small and, for instance, $c = 2$.
\end{itemize}

This concludes the proof of the theorem.

\end{proof}

%------------------------------------------------------------------

\subsection{Approximation algorithm}

In this subsection we give a polynomial-time $\big(O(\sqrt{d \ln d}\,),O(d\sqrt{d \ln d}\,)\big)$-approximation algorithm for the CVP. This algorithm rounds an optimal solution of the LP relaxation of the CVP given in Section \ref{sec:TU} (see page \pageref{sec:TU}).

If the LP relaxation (LP) is infeasible, then so is the corresponding CVP instance. Now assume that (LP) is feasible and let $\LP$ denote the value of an optimal solution of (LP). Obviously, we have $\OPT \geqslant \LP$.

Note that for each basic feasible solution of (LP), there are at most $d$ components of $\mathbf{u}$ that are nonzero. This is the case, because if we assume that $q > d$ nonzero coefficients exist, then only $k-q$ inequalities of type \eqref{eq:(2)} are satisfied with equality. As we have $2d+k$ variables, we need at least $2d+k$ independent equalities to define a vertex. Thus, there must be $(2d+k)-(k-q)-d=d + q > 2d$ independent inequalities of type \eqref{eq:(3)}, \eqref{eq:(4)}, \eqref{eq:(3bis)} and \eqref{eq:(4bis)} that are satisfied with equality. This is a contradiction, as there can be at most $2d$ such inequalities. Thus, for any extremal optimal solution of the linear program, at most $d$ of the coefficients $u_j$ are nonzero.

\begin{algorithm}[h!]
\caption{~}
\label{rounding-algo}
\begin{algorithmic}
\REQUIRE $\mathbf{a} \in \mathbb{Z}^d_+$, $\bd \in \mathbb{Z}_+ \cup \{\infty\}$, and $\mathbf{g}_1,\mathbf{g}_2,\dots,\mathbf{g}_k \in \left\{0,1\right\}^d$.
\ENSURE An approximation $\mathbf{\tilde{b}}$ of $\mathbf{a}$.
\STATE If (LP) is infeasible, report that the CVP instance is infeasible.
\STATE Otherwise, compute an extremal optimal solution $\left(\boldsymbol{\alpha}^*,\boldsymbol{\beta^*},\mathbf{u}^*\right)$ of (LP).
\FORALL{$j\in\left[k\right]$ }
\STATE if $u^*_j$ is integer $\tilde u_j:=u^*_j$, otherwise
$\tilde u_j:=
\left\{\begin{array}{@{}l@{\ }l}
\lceil u^*_j \rceil & \mbox{with probability } u^*_j-\left\lfloor u^*_j\right\rfloor,\\[1ex]
\lfloor u^*_j\rfloor & \mbox{with probability } \left\lceil u^*_j\right\rceil-u^*_j.
\end{array}\right.$
\ENDFOR
\STATE Return $\mathbf{\tilde{b}}:=\sum_{j=1}^k\tilde u_j\mathbf{g}_{j}$.
\end{algorithmic}
\end{algorithm}

Algorithm \ref{rounding-algo} is an application of the randomized rounding technique. This is a widespread technique for approximating combinatorial optimization problems, see, e.g., the survey by Motwani, Naor and Raghavan~\cite{MNR97}. A basic problem where randomized rounding proves useful is the {\sl lattice approximation problem}: given a binary matrix $H$ of size $d \times d$ and a rational column vector $\mathbf{x} \in [0,1]^d$, find a binary vector $\mathbf{y} \in \{0,1\}^d$ so as to minimize $\left\|H(\mathbf{x}-\mathbf{y})\right\|_{\infty}$.

We will use the following result due to Motwani et al.~\cite{MNR97}. It is a consequence of the Chernoff bound.
\begin{thm}[\cite{MNR97}] \label{thm:Theorem_11.1}
Let $(H,\mathbf{x})$ be an instance of the lattice approximation problem, and let $\mathbf{y}$ be the binary vector obtained by letting $y_j = 1$ with probability $x_j$ and $y_j = 0$ with probability $1-x_j$, independently, for $j \in [d]$. Then the resulting rounded vector $\mathbf{y}$ satisfies $\left\|H(\mathbf{x}-\mathbf{y})\right\|_{\infty} \leqslant \sqrt{4d\ln d}$, with probability at least $1-\frac{1}{d}$.
\end{thm}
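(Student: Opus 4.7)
The plan is to apply a standard Chernoff/Hoeffding tail bound row by row and then take a union bound over the $d$ rows of $H$. Fix a row index $i \in [d]$ and consider the $i$-th coordinate of $H(\mathbf{x}-\mathbf{y})$, namely
\[
Z_i \;:=\; \sum_{j=1}^d h_{ij}\,(x_j - y_j).
\]
Because the coordinates of $\mathbf{y}$ are independent and $\mathbb{E}[y_j] = x_j$, the summands in $Z_i$ are independent random variables with mean zero. Each summand $h_{ij}(x_j-y_j)$ is supported in $[-h_{ij}(1-x_j),\,h_{ij}x_j]$, hence its range has length $h_{ij} \leqslant 1$; let $n_i := \sum_j h_{ij}$ denote the number of nontrivial summands, and note $n_i \leqslant d$.

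Next I would apply Hoeffding's inequality to the sum $Z_i$ to obtain the sub-Gaussian tail estimate
\[
\Pr\bigl[|Z_i| \geqslant t\bigr] \;\leqslant\; 2\exp\!\Bigl(-\tfrac{2t^2}{n_i}\Bigr) \;\leqslant\; 2\exp\!\Bigl(-\tfrac{2t^2}{d}\Bigr).
\]
Substituting $t = \sqrt{4d\ln d}$ gives $\Pr[|Z_i| \geqslant \sqrt{4d\ln d}] \leqslant 2\exp(-8\ln d) = 2/d^{8}$, which is comfortably smaller than $1/d^2$ for all $d \geqslant 2$.

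Finally, I would take a union bound over the $d$ rows of $H$:
\[
\Pr\!\Bigl[\,\|H(\mathbf{x}-\mathbf{y})\|_\infty \geqslant \sqrt{4d\ln d}\,\Bigr]
\;\leqslant\; \sum_{i=1}^d \Pr\bigl[\,|Z_i| \geqslant \sqrt{4d\ln d}\,\bigr]
\;\leqslant\; \frac{2}{d^{7}} \;\leqslant\; \frac{1}{d},
\]
yielding the desired conclusion with probability at least $1-1/d$.

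The only delicate point is the choice of constants: one needs the exponent produced by Chernoff/Hoeffding to be at least $2\ln d + O(1)$ so that the union bound over the $d$ rows still leaves a failure probability of at most $1/d$. The specific constant $4$ inside $\sqrt{4d\ln d}$ is not tight — any constant $c > 2$ would suffice for large $d$ — but $c = 4$ keeps the bookkeeping clean for every $d \geqslant 2$ without a case analysis for small $d$. No subtler probabilistic inequality is required; the whole argument is a textbook application of the Chernoff bound combined with the union bound.
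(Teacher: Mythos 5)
Your proof is correct and follows essentially the same route as the source: the paper gives no proof of its own but cites \cite{MNR97}, where the bound is obtained exactly as you do it, by a Chernoff--Hoeffding tail estimate applied to each row of $H(\mathbf{x}-\mathbf{y})$ (the zero-mean bounded summands $h_{ij}(x_j-y_j)$) followed by a union bound over the $d$ rows. Your constant bookkeeping ($2/d^{8}$ per row, $2/d^{7}\leqslant 1/d$ for $d\geqslant 2$, and the $d=1$ case being vacuous) checks out.
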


We resume our discussion of Algorithm \ref{rounding-algo}. By the discussion above, we know that at most $d$ of the components of $\mathbf{u^*}$ are nonzero. Without loss of generality, we can assume that all nonzero components of $\mathbf{u^*}$ are among its $d$ first components. Then, we let $H$ be the $d\times d$ matrix formed of the first $d$ columns of $G$. (W.l.o.g., we may assume that $d \leqslant k$. If this is not the case we can add generators consisting only of zeros.) Next, we let $\mathbf{x} \in [0,1]^d$ be defined via the following equation (where the floor of the $\mathbf{u^*}$ is computed component-wise):
$$
\mathbf{u}^*-\lfloor \mathbf{u}^*\rfloor =
\left(\begin{array}{c}\mathbf{x}\\ \mathbf{0}\end{array}\right).
$$
Finally, the relationship between the rounded vectors is as follows:
$$
\mathbf{\tilde{u}}-\lfloor \mathbf{u}^*\rfloor =
\left(\begin{array}{c}\mathbf{y}\\ \mathbf{0}\end{array}\right).
$$
We obtain the following result.

\begin{thm} \label{thm:rounding1}
Algorithm \ref{rounding-algo} is a randomized polynomial-time algorithm that either successfully concludes that the given CVP instance is infeasible, or returns a vector $\mathbf{\tilde{b}}$ that is a nonnegative integer linear combination of the generators and satisfies $\|\mathbf{a}-\mathbf{\tilde{b}}\|_{\infty} \leqslant C + \sqrt{4 d \ln d}$ and $\|\mathbf{a}-\mathbf{\tilde{b}}\|_1 \leqslant \OPT + d \sqrt{4 d \ln d}$, with probability at least $1 - \frac{1}{d}$.
\end{thm}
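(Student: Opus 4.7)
The plan is to show that Algorithm \ref{rounding-algo} is really an instance of randomized rounding as analyzed by Motwani, Naor and Raghavan, and then to translate the $\ell_\infty$ guarantee of Theorem \ref{thm:Theorem_11.1} into both the required $\ell_\infty$ and $\ell_1$ bounds via the triangle inequality. Polynomial running time is immediate: (LP) has polynomially many variables and constraints and hence can be solved in polynomial time, so we obtain an extremal optimum $(\boldsymbol{\alpha}^*,\boldsymbol{\beta}^*,\mathbf{u}^*)$ in polynomial time; the rounding step is trivially polynomial. Infeasibility of (LP) certifies infeasibility of the CVP since any feasible integer solution of the CVP yields, with $\alpha_i := \max(a_i-b_i,0)$ and $\beta_i := \max(b_i-a_i,0)$, a feasible solution of (LP).

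Assume henceforth that (LP) is feasible, and set $\mathbf{b}^* := \sum_j u_j^* \mathbf{g}_j$. Constraints \eqref{eq:(3bis)}--\eqref{eq:(4bis)} combined with \eqref{eq:(1)} give $\|\mathbf{a}-\mathbf{b}^*\|_\infty \leqslant \bd$, while the LP optimality and the fact that at any optimum at least one of $\alpha_i^*,\beta_i^*$ vanishes for each $i$ gives $\|\mathbf{a}-\mathbf{b}^*\|_1 = \LP \leqslant \OPT$. The structural observation made before the statement says that at most $d$ of the coordinates of $\mathbf{u}^*$ are nonzero; after renumbering, assume they lie among the first $d$. Form the $d\times d$ matrix $H$ from the first $d$ columns of $G$, and define $\mathbf{x}\in[0,1]^d$ and $\mathbf{y}\in\{0,1\}^d$ by
\begin{equation*}
\mathbf{u}^* - \lfloor \mathbf{u}^* \rfloor = \begin{pmatrix} \mathbf{x} \\ \mathbf{0}\end{pmatrix}, \qquad \tilde{\mathbf{u}} - \lfloor \mathbf{u}^* \rfloor = \begin{pmatrix} \mathbf{y} \\ \mathbf{0}\end{pmatrix}.
\end{equation*}
Because the rounding rule in Algorithm \ref{rounding-algo} sets, independently for each $j$, $y_j=1$ with probability $x_j$ and $y_j=0$ with probability $1-x_j$, the pair $(H,\mathbf{x})$ together with $\mathbf{y}$ satisfies the hypotheses of Theorem \ref{thm:Theorem_11.1}. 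Hence with probability at least $1-1/d$ we have $\|H(\mathbf{x}-\mathbf{y})\|_\infty \leqslant \sqrt{4d\ln d}$.

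Now unwind the identification: $\mathbf{b}^* - \tilde{\mathbf{b}} = G(\mathbf{u}^*-\tilde{\mathbf{u}}) = H(\mathbf{x}-\mathbf{y})$, the last equality because $\mathbf{u}^* - \tilde{\mathbf{u}}$ vanishes outside the first $d$ coordinates. Therefore, with probability at least $1-1/d$,
\begin{equation*}
\|\mathbf{a}-\tilde{\mathbf{b}}\|_\infty \leqslant \|\mathbf{a}-\mathbf{b}^*\|_\infty + \|\mathbf{b}^*-\tilde{\mathbf{b}}\|_\infty \leqslant \bd + \sqrt{4d\ln d},
\end{equation*}
and, using $\|\mathbf{v}\|_1 \leqslant d\,\|\mathbf{v}\|_\infty$ for any $\mathbf{v}\in\mathbb{R}^d$,
\begin{equation*}
\|\mathbf{a}-\tilde{\mathbf{b}}\|_1 \leqslant \|\mathbf{a}-\mathbf{b}^*\|_1 + \|\mathbf{b}^*-\tilde{\mathbf{b}}\|_1 \leqslant \OPT + d\sqrt{4d\ln d}.
\end{equation*}

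I do not anticipate any serious technical obstacle; the main care needed is bookkeeping. First, one must justify that (LP) is feasible iff the CVP is feasible when one ignores integrality (the reverse direction), so that reporting infeasibility of (LP) is correct. Second, one must carefully match the rounding rule of Algorithm \ref{rounding-algo} with the independent $\{0,1\}$-rounding hypothesized in Theorem \ref{thm:Theorem_11.1}, since the algorithm rounds $u_j^*$ (which may exceed $1$) rather than a vector in $[0,1]^d$; isolating the fractional part $\mathbf{x}$ is the clean way to do this. The remaining steps are essentially the triangle inequality and the crude bound $\|\cdot\|_1 \leqslant d\|\cdot\|_\infty$ on $\mathbb{R}^d$.
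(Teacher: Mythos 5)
Your proposal is correct and follows essentially the same route as the paper's proof: take an extremal optimum of (LP), use the basic-solution argument to reduce to a $d\times d$ lattice approximation instance $(H,\mathbf{x})$ with $\mathbf{x}$ the fractional part of $\mathbf{u}^*$, invoke Theorem \ref{thm:Theorem_11.1}, and finish with the triangle inequality together with $\|\cdot\|_1\leqslant d\,\|\cdot\|_\infty$. The extra bookkeeping you supply (why LP-infeasibility certifies CVP-infeasibility, and why the algorithm's rounding of possibly-large $u_j^*$ matches the $[0,1]$-rounding hypothesis) only makes explicit what the paper leaves implicit.
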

\begin{proof}
Without loss of generality, assume that (LP) is feasible. Thus Algorithm \ref{rounding-algo} returns an approximation $\mathbf{\tilde{b}}$ of $\mathbf{a}$. Let $\mathbf{b^*} = \sum_{j=1}^k u^*_j\,\mathbf{g}_j$. By Theorem \ref{thm:Theorem_11.1} and by the discussion above, we have
$$
\|\mathbf{\tilde{b}} - \mathbf{b^*}\|_\infty=\|G \left(\mathbf{\tilde{u}}-\mathbf{u^*}\right)\|_\infty =\|H \left(\mathbf{x}-\mathbf{y}\right)\|_\infty \leqslant \sqrt{4 d \ln d},
$$
with probability at least $1 - \frac{1}{d}$. Now, the result follows from the inequalities
$$
\|\mathbf{a}-\mathbf{\tilde{b}}\|_\infty
\leqslant \|\mathbf{a}-\mathbf{b^*}\|_\infty +
\|\mathbf{b^*}-\mathbf{\tilde{b}}\|_\infty
\leqslant C + \|\mathbf{b^*}-\mathbf{\tilde{b}}\|_\infty
$$
and
$$
\|\mathbf{a}-\mathbf{\tilde{b}}\|_1
\leqslant \|\mathbf{a}-\mathbf{b^*}\|_1 +
\|\mathbf{b^*}-\mathbf{\tilde{b}}\|_1 \leqslant
\LP + \|\mathbf{b^*}-\mathbf{\tilde{b}}\|_1
\leqslant \OPT + d\,\|\mathbf{b^*}-\mathbf{\tilde{b}}\|_\infty.
$$
\end{proof}

By a result of Raghavan \cite{R88}, Algorithm \ref{rounding-algo} can be derandomized, at the cost of multiplying the additive approximation guarantees $\sqrt{4d\ln d}$ and $d\sqrt{4d\ln d}$ by a constant. We obtain the following result:

\begin{cor}
There exists a polynomial-time $\big(O(\sqrt{d \ln d}\,),O(d\sqrt{d \ln d}\,)\big)$-approximation algorithm for the CVP.
\end{cor}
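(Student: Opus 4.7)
The plan is to derandomize Algorithm~\ref{rounding-algo} via the method of conditional probabilities, as originally carried out by Raghavan~\cite{R88} for exactly this type of Chernoff-based rounding. Concretely, I would keep the first two steps of Algorithm~\ref{rounding-algo} unchanged: solve (LP) and, if it is feasible, compute an extremal optimal solution $(\boldsymbol\alpha^*,\boldsymbol\beta^*,\mathbf u^*)$. Exactly as in the proof of Theorem~\ref{thm:rounding1}, at most $d$ entries of $\mathbf u^*$ are fractional, so after re-indexing I reduce the rounding task to a $d \times d$ lattice approximation instance $(H,\mathbf x)$, where $H$ is formed of the relevant $d$ columns of $G$ and $\mathbf x = \mathbf u^* - \lfloor \mathbf u^*\rfloor$ restricted to the fractional coordinates.

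The key step is then to replace the random rounding of $\mathbf x$ by a deterministic rounding that still guarantees $\|H(\mathbf x-\mathbf y)\|_\infty = O(\sqrt{d\ln d})$. Following Raghavan, for each row $H_i$ of $H$ I would use the standard moment-generating-function expression underlying the Chernoff bound as a pessimistic estimator on the probability that $|H_i(\mathbf x - \mathbf y)|$ exceeds a threshold $T = c\sqrt{4d\ln d}$. Summing these $2d$ estimators (one for each tail of each row) gives a potential function $\Phi$ whose initial value is strictly below $1$ for a suitable constant $c$, by the same Chernoff calculation as in Theorem~\ref{thm:Theorem_11.1}. Because each estimator factors as a product over the independent rounding choices, $\Phi$ can be evaluated in polynomial time, and there is always a choice $y_j \in \{0,1\}$ that does not increase $\mathbb{E}[\Phi \mid \text{choices so far}]$. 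Iterating over $j=1,\dots,d$ yields in polynomial time a deterministic binary vector $\mathbf y$ with $\Phi(\mathbf y) < 1$, hence $\|H(\mathbf x-\mathbf y)\|_\infty \leqslant T$.

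From $\mathbf y$ I recover $\mathbf{\tilde u}$ and output $\mathbf{\tilde b} = \sum_j \tilde u_j \mathbf g_j$. Exactly the two triangle-inequality estimates used at the end of the proof of Theorem~\ref{thm:rounding1} then give, deterministically,
\[
\|\mathbf a - \mathbf{\tilde b}\|_\infty \leqslant C + c\sqrt{4 d \ln d}
\quad\text{and}\quad
\|\mathbf a - \mathbf{\tilde b}\|_1 \leqslant \OPT + c\,d\sqrt{4 d \ln d},
\]
which is the claimed $\big(O(\sqrt{d\ln d}\,),O(d\sqrt{d\ln d}\,)\big)$-approximation.

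The main obstacle is not the algorithmic reduction, which is essentially identical to the randomized case, but choosing the pessimistic estimators and verifying that they can (i) be computed in polynomial time on the specific structure of the lattice approximation problem at hand, and (ii) bounded above by $1$ initially with only a constant-factor slack in the threshold $T$ relative to the Chernoff bound of Theorem~\ref{thm:Theorem_11.1}. Both points are handled by Raghavan's framework~\cite{R88}: the moment-generating-function estimators factor over coordinates, so the conditional expectations are products of $O(d)$ explicit terms, and the constant-factor blow-up from $\sqrt{4d\ln d}$ to $c\sqrt{4d\ln d}$ in $T$ is exactly what is needed to make the sum of pessimistic estimators start below $1$. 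I would therefore mainly be invoking~\cite{R88} rather than redoing the derivation in full.
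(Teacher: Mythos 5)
Your proposal is correct and follows essentially the same route as the paper: the paper's proof of this corollary is simply to invoke Raghavan's derandomization of Algorithm~\ref{rounding-algo} (at the cost of a constant factor in the additive guarantees $\sqrt{4d\ln d}$ and $d\sqrt{4d\ln d}$), and your method-of-conditional-probabilities argument with pessimistic estimators is exactly the content of that citation, combined with the same triangle-inequality estimates from Theorem~\ref{thm:rounding1}.
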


In the case where $\bd = \infty$, we can slightly improve Theorem \ref{thm:rounding1}, as follows.

\begin{thm} \label{thm:rounding2}
Suppose $\bd = \infty$. Then, Algorithm \ref{rounding-algo} is a randomized polynomial-time algorithm that returns a vector $\mathbf{\tilde{b}}$ that is a nonnegative integer linear combination of the generators and satisfies $\|\mathbf{a}-\mathbf{\tilde{b}}\|_1 \leqslant \OPT + \sqrt{\frac{\ln 2}{2}}\,d \sqrt{d}$ on average.
\end{thm}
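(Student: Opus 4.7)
The overall strategy is to bound the expected $\ell_1$ error directly, rather than passing through the $\ell_\infty$ error as in Theorem~\ref{thm:rounding1}. First I would observe that when $\bd = \infty$ the LP relaxation (LP) is trivially feasible: take $u_j = 0$ for all $j$, $\alpha_i = 0$ and $\beta_i = a_i$. Hence Algorithm~\ref{rounding-algo} never reports infeasibility and always produces some $\mathbf{\tilde{b}} = \sum_{j=1}^k \tilde{u}_j \mathbf{g}_j$. Writing $\mathbf{b}^* := \sum_{j=1}^k u_j^* \mathbf{g}_j$, the triangle inequality gives the deterministic bound $\|\mathbf{a}-\mathbf{\tilde{b}}\|_1 \le \|\mathbf{a}-\mathbf{b}^*\|_1 + \|\mathbf{b}^*-\mathbf{\tilde{b}}\|_1 = \LP + \|\mathbf{b}^*-\mathbf{\tilde{b}}\|_1 \le \OPT + \|\mathbf{b}^*-\mathbf{\tilde{b}}\|_1$, so the whole task reduces to estimating $E[\|\mathbf{b}^*-\mathbf{\tilde{b}}\|_1]$.

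Next I would analyze this residual coordinate by coordinate. Let $p_j := u_j^* - \lfloor u_j^* \rfloor$ and set $X_i := \tilde{b}_i - b_i^* = \sum_{j=1}^k (\tilde{u}_j - u_j^*)\, g_{ij}$. The summands are independent with mean zero; they vanish when $u_j^*$ is integer, and otherwise the rounding rule makes $\tilde{u}_j - u_j^*$ a centered Bernoulli variable taking values in $\{-p_j,\, 1-p_j\}$ with variance $p_j(1-p_j) \le 1/4$. The key ingredient from the paragraph preceding Algorithm~\ref{rounding-algo} is that an extremal optimal solution of (LP) has at most $d$ nonzero (hence nonintegral) components $u_j^*$. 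Consequently $\mathrm{Var}(X_i) = \sum_{j:\, g_{ij}=1,\, u_j^* \notin \mathbb{Z}} p_j(1-p_j) \le d/4$ for every $i \in [d]$.

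To close the argument I would apply Jensen's inequality: $E[|X_i|] \le \sqrt{E[X_i^2]} = \sqrt{\mathrm{Var}(X_i)} \le \sqrt{d}/2$. Summing over the $d$ coordinates yields $E[\|\mathbf{b}^*-\mathbf{\tilde{b}}\|_1] = \sum_{i=1}^d E[|X_i|] \le \tfrac{1}{2}\, d\sqrt{d}$, and combining with the triangle inequality gives $E[\|\mathbf{a}-\mathbf{\tilde{b}}\|_1] \le \OPT + \tfrac{1}{2}\, d\sqrt{d}$. Since $\tfrac{1}{2} \le \sqrt{\ln 2/2}$, the stated bound follows. There is no real obstacle here once one notes that the extremal-solution count controls \emph{nonintegral} (not merely nonzero) components in $\mathrm{Var}(X_i)$; a sharper constant than $\tfrac{1}{2}$ could be extracted from a Hoeffding tail-integration argument, which presumably explains the slightly weaker constant $\sqrt{\ln 2/2}$ that appears in the statement, but the simple second-moment bound above already suffices.
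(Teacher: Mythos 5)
Your proof is correct, and while it follows the paper's skeleton (triangle inequality to reduce to bounding $E\big[\|\mathbf{b}^*-\mathbf{\tilde{b}}\|_1\big]$, the basic-solution argument limiting the number of fractional coefficients $u_j^*$ to at most $d$, and a per-coordinate analysis of the independent, mean-zero rounding errors), it replaces the paper's key probabilistic ingredient by a different and simpler one. The paper invokes Lemma \ref{lem:expectation_abs_val_lambda}, proved in the appendix via the Hoeffding-type moment-generating-function bound of Alon and Spencer together with Jensen's inequality applied to the exponential, yielding $E\big[|X_{i1}+\cdots+X_{id}|\big]\leqslant \sqrt{\tfrac{\ln 2}{2}}\sqrt{d}$; you instead use the second-moment bound $E[|X_i|]\leqslant\sqrt{\operatorname{Var}(X_i)}\leqslant \tfrac{1}{2}\sqrt{d}$, since each fractional coefficient contributes variance $p_j(1-p_j)\leqslant\tfrac14$ and there are at most $d$ of them. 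This is both more elementary (no MGF machinery, only independence, $E[\tilde{u}_j]=u_j^*$, and Cauchy--Schwarz/Jensen) and sharper: your constant $\tfrac12$ beats the paper's $\sqrt{\ln 2/2}\approx 0.589$, so the stated bound follows a fortiori, and your closing remark correctly identifies that the paper's weaker constant stems from its exponential-moment route. One small wording quibble: ``at most $d$ nonzero (hence nonintegral) components'' has the implication backwards --- the correct statement, which is what your variance computation actually uses, is that every nonintegral $u_j^*$ is nonzero, so there are at most $d$ nonintegral components; also note that with $\bd=\infty$ the constraints $\alpha_i\leqslant\bd$, $\beta_i\leqslant\bd$ are vacuous, so your feasibility observation is exactly right and matches the implicit assumption in the paper.
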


Our proof of Theorem \ref{thm:rounding2} uses the following lemma, which is proved in the appendix.

\begin{lem}\label{lem:expectation_abs_val_lambda}
Let $q$ be a positive integer and let $X_1,X_2,\dots,X_q$ be $q$ independent random variables such that, for all $j\in [q]$, $P[X_j = 1-p_j] = p_j$ and $P[X_j = -p_j] = 1-p_j$. Then
$$
E\Big[|X_1 + X_2 + \cdots + X_q|\Big] \leqslant \sqrt{\frac{\ln 2}{2}}\;\sqrt{q}.
$$
\end{lem}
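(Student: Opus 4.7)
The plan is to obtain the bound via a Chernoff-style moment generating function (MGF) argument, then optimize over the exponential parameter; the constant $\sqrt{\ln 2 / 2}$ suggests that this route produces the stated inequality exactly.

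First I would verify that each $X_j$ is centered ($p_j(1-p_j) + (1-p_j)(-p_j) = 0$) and is supported on the length-one interval $[-p_j,1-p_j]$. Writing $S := X_1 + \dots + X_q$, I invoke Hoeffding's lemma for centered, bounded random variables: for $X \in [a,b]$ with $E[X]=0$, one has $E[e^{\lambda X}] \leqslant e^{\lambda^2 (b-a)^2/8}$. Applied here, this gives $E[e^{\lambda X_j}] \leqslant e^{\lambda^2/8}$ for every $\lambda\in\mathbb{R}$. Independence then yields $E[e^{\lambda S}] \leqslant e^{\lambda^2 q/8}$, and applying the same inequality to the variables $-X_j$ (which satisfy the same hypotheses with $p_j$ replaced by $1-p_j$) gives $E[e^{-\lambda S}] \leqslant e^{\lambda^2 q/8}$ as well.

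Next I control the MGF of $|S|$ by a two-sided union: since $e^{\lambda|S|} \leqslant e^{\lambda S} + e^{-\lambda S}$, we get $E[e^{\lambda|S|}] \leqslant 2 e^{\lambda^2 q/8}$ for every $\lambda>0$. By Jensen's inequality applied to the convex map $t \mapsto e^{\lambda t}$ and the nonnegative variable $|S|$, this implies $e^{\lambda E[|S|]} \leqslant 2e^{\lambda^2 q/8}$; taking logarithms and dividing by $\lambda$ gives
$$
E[|S|] \leqslant \frac{\ln 2}{\lambda} + \frac{\lambda q}{8}.
$$
Minimizing the right-hand side over $\lambda > 0$ at the critical point $\lambda = 2\sqrt{2\ln 2 / q}$, both terms equal $\tfrac{1}{2}\sqrt{(\ln 2)\,q/2}$, and their sum is precisely $\sqrt{\ln 2/2}\,\sqrt{q}$, which is the claimed bound.

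The only non-mechanical step is invoking Hoeffding's lemma in the asymmetric case $p_j \neq 1/2$; but this is a textbook result, proved by showing that the log-MGF $\psi_j(\lambda) := \ln\bigl(p_j e^{\lambda(1-p_j)} + (1-p_j) e^{-\lambda p_j}\bigr)$ satisfies $\psi_j(0)=\psi_j'(0)=0$ and $\psi_j''(\lambda) \leqslant 1/4$ for all $\lambda$, whence $\psi_j(\lambda) \leqslant \lambda^2/8$ by Taylor expansion with remainder. I do not anticipate any serious obstacle; the only subtlety is choosing the two-sided bound on $E[e^{\lambda|S|}]$ carefully (a factor of $2$ rather than dropping the absolute value) so that the optimization genuinely produces $\sqrt{\ln 2/2}$ instead of a worse constant.
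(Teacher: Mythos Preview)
Your argument is correct and essentially identical to the paper's: both bound the MGF of $S$ by $e^{\lambda^2 q/8}$ via Hoeffding's lemma (the paper cites the Alon--Spencer form), then combine $e^{\lambda|S|}\leqslant e^{\lambda S}+e^{-\lambda S}$ with Jensen and optimize at $\lambda=\sqrt{8\ln 2/q}$. The only cosmetic difference is that the paper's cited lemma passes through the averaged parameter $p=\tfrac{1}{q}\sum p_j$ before reaching $e^{\lambda^2 q/8}$, whereas you apply Hoeffding per variable and multiply; the resulting bound and constant are the same.
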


We are now ready to prove the theorem.

\begin{proof}[Proof of Theorem \ref{thm:rounding2}]
We have
\begin{eqnarray*}
E\left[\left\|\mathbf{a}-\mathbf{\tilde b}\right\|_1\right]&\leqslant & E\Big [\Big \|\mathbf{a}-\mathbf{b^*}\Big \|_1\Big ]+E\left[\left\|\mathbf{b^*}-\mathbf{\tilde b}\right\|_1\right]\\
&=&\LP + E\left[\left\|\sum^k_{j=1}u^*_j\mathbf{g}_{j}-\sum_{j=1}^k\tilde{u}_j \mathbf{g}_{j}\right\|_1\right]\\
&=&\LP + E\left[\sum^d_{i=1}\left|\sum^k_{j=1}g_{ij}\left(u^*_j-\tilde{u}_j\right)\right|\right].
\end{eqnarray*}
Without loss of generality, we may assume that $u_j^* = 0$, and thus $\tilde{u}_j = 0$, for $j > d$. This is due to the fact that $\mathbf{u}^*$ is a basic feasible solution, see the above discussion.

Now, let $X_{ij}:=g_{ij}\left(u^*_j-\tilde{u}_j\right)$ for all $i,j \in [d]$. For each fixed $i\in [d]$, $X_{i1},\dots,X_{id}$ are independent random variables satisfying $X_{ij}=0$ if $g_{ij}=0$ or $u^*_{j}\in\mathbb{Z}_+$ and otherwise
\[X_{ij}=
\left\{
\begin{array}{ll}
u^*_j-\left\lceil u^*_j\right\rceil & \text{with probability } u^*_j-\left\lfloor u^*_j\right\rfloor,\\[1ex]
u^*_j-\left\lfloor u^*_j\right\rfloor & \text{with probability } \left\lceil u^*_j\right\rceil-u^*_j.
\end{array}\right.
\]
By Lemma \ref{lem:expectation_abs_val_lambda}, we get:
\begin{eqnarray*}
E\left[\left\|\mathbf{a}-\mathbf{\tilde b}\right\|_1\right]&\leqslant &\LP + E\left[\sum^d_{i=1}\left|X_{i1} + X_{i2} + \cdots + X_{id}\right|\right]\\
&= & \LP + \sum_{i=1}^d E\Big[\left|X_{i1} + X_{i2} + \cdots + X_{id}\right|\Big]\\
&\leqslant &\LP + \sqrt{\frac{\ln 2}{2}}\;d\sqrt{d}\\
&\leqslant &\OPT + \sqrt{\frac{\ln 2}{2}}\;d\sqrt{d}.
\end{eqnarray*}
\end{proof}

A natural question is the following: Is it possible to derandomize Algorithm \ref{rounding-algo} in order to obtain a polynomial-time approximation algorithm for the CVP that provides a total change of at most $\OPT + O(d\sqrt{d}\,)$, provided that $\bd = \infty$? We leave this question open.

%------------------------------------------------------------------%------------------------------------------------------------------

\section{Application to IMRT}

In this section we consider a target matrix $A$ and the set of generators $\mathcal{S}$ formed by segments (that is, binary matrices whose rows satisfy the consecutive ones property). In the first part of this section we consider the case where $\mathcal{S}$ is formed by all the segments that satisfy the minimum separation constraint. In the last part we consider any set of segments $\mathcal{S}$. We show that in this last case the problem is hard to approximate, even if the matrix has only two rows.

%------------------------------------------------------------------

\subsection{The minimum separation constraint}

In this subsection we consider the CVP under the constraint that the set $\mathcal{S}$ of generators is formed by all segments that satisfy the minimum separation constraint. Given $\lambda\in \left[n\right]$, this constraint requires that the rows which are not totally closed have a leaf opening of at least $\lambda$. Mathematically, the leaf positions of open rows $i \in [m]$ have to satisfy $r_i-\ell_i\geqslant\lambda-1$. We cannot decompose any matrix $A$ under this constraint. Indeed, the following single row matrix cannot be decomposed for $\lambda=3$:
\[A=\left(\begin{array}{ccccc}1 & 1 & 4 & 1 & 1\end{array}\right).\]
The problem of determining if it is possible to decompose a matrix $A$ under this constraint was proved to be polynomial by Kamath et al.\ \cite{Kam03a}.

Obviously, the minimum separation constraint is a restriction on the leaf openings in each single row, but does not affect the combination of leaf openings in different rows. Again, more formally, the set of allowed leaf openings in one row $i$ is
\begin{eqnarray*}
\mathcal{S}_i=\{[\ell_i,r_i] \ | \ r_i-\ell_i \geqslant \lambda-1 \text{ or } r_i=\ell_i-1\},
\end{eqnarray*}
and does not depend on $i$. If we denote a segment by the set of its leaf positions $([\ell_1,r_1],\dots,[\ell_m,r_m])$ then the set of feasible segments $\mathcal{S}$ for the minimum separation constraint is simply $\mathcal{S} = \mathcal{S}_1 \times \mathcal{S}_2 \times \dots \times \mathcal{S}_m$. Thus, in order to solve the CVP under the minimum separation constraint, it is sufficient to focus on single rows.

Indeed, whenever the set of feasible segments has a structure of the form $\mathcal{S}=\mathcal{S}_1 \times \mathcal{S}_2 \times \dots \times \mathcal{S}_m$, which means that the single row solutions can be combined arbitrarily and we always get a feasible segment, solving the single row problem is sufficient.

From Theorem \ref{thm:CVP_poly}, we infer our next result.

\begin{cor}
The restriction of the CVP where the vectors are $m \times n$ matrices and the set of generators is the set of all segments satisfying the minimum separation constraint can be solved in polynomial time.
\end{cor}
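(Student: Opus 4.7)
The plan is to reduce the matrix CVP to $m$ independent one-row CVP instances by exploiting the product structure $\mathcal{S} = \mathcal{S}_1 \times \cdots \times \mathcal{S}_m$, and then to dispatch each one-row instance to Theorem \ref{thm:CVP_poly}.

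First I would observe that the constraint $\|A-B\|_\infty \leqslant C$ and the objective $\|A-B\|_1$ both split row by row. Hence if $\OPT_i$ denotes the optimal value of the single-row CVP with target row $A_i$, bound $C$, and generator set $\mathcal{S}_i$, then $\OPT \geqslant \sum_{i=1}^m \OPT_i$. For the reverse inequality I would take per-row optimal integer decompositions $B_i = \sum_j u_{ij} \mathbf{g}_{ij}$ with $\mathbf{g}_{ij} \in \mathcal{S}_i$ and reassemble them: since the empty row pattern satisfies the minimum separation constraint by definition and hence lies in every $\mathcal{S}_i$, I can pad each row's multiset of chosen patterns with all-closed rows until all $m$ multisets have the same cardinality $K$, and then pair them pointwise. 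Each of the $K$ resulting $m$-row patterns is a segment in $\mathcal{S}$ by the product structure, so summing them with coefficient $1$ reproduces $B$ and gives total change exactly $\sum_{i=1}^m \OPT_i$. Similarly, the matrix CVP is infeasible if and only if some one-row subproblem is, so infeasibility can be detected row by row.

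Second, each one-row instance is solved in polynomial time by Theorem \ref{thm:CVP_poly}: the generator matrix for $\mathcal{S}_i$ has columns with the consecutive ones property, hence is an interval matrix and in particular totally unimodular. Alternatively, Proposition \ref{poly_CVP_C1} gives an even more direct reduction to minimum-cost flow. The only step that requires any care is the reassembly: the empty row pattern must lie in each $\mathcal{S}_i$, which is exactly how the minimum separation constraint is defined, and the multiplicities $u_{ij}$ can be large, so the final matrix decomposition is best encoded by storing the $m$ one-row decompositions together with a pairing rule rather than listing all $K$ full segments explicitly. With that caveat, the result reduces to a direct appeal to Section 2.
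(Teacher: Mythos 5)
Your proposal is correct and follows essentially the same route as the paper: exploit the product structure $\mathcal{S}=\mathcal{S}_1\times\cdots\times\mathcal{S}_m$ to reduce to independent single-row instances, then solve each row via Theorem \ref{thm:CVP_poly} (or Proposition \ref{poly_CVP_C1}), which is exactly the paper's argument. You merely spell out details the paper leaves implicit (the padding-with-closed-rows reassembly and the compact encoding of the recombined decomposition), and these details are handled correctly.
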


%------------------------------------------------------------------

\subsection{Further hardness results}

As we have seen in Subsection 2.1, the CVP with generators satisfying the consecutive ones property is polynomial (see Theorem \ref{thm:CVP_poly}). Moreover, we have proved in Theorem \ref{hard_approx} that the CVP is hard to approximate within an additive error of $\left(\ln 2-\varepsilon\right)d$ for a general set of generators. We now prove that, surprisingly, the case where generators contain at most two blocks of ones, which corresponds in the IMRT context of having a $2 \times n$ intensity matrix $A$ and a set of generators formed by $2 \times n$ segments, is NP-hard to approximate within an additive error of $\varepsilon\,n$, for some $\varepsilon>0$.

\begin{thm}
There exists some $\varepsilon>0$ such that the CVP, restricted to $2\times n$ matrices and generators with their ones consecutive on each row, admits no polynomial-time $(\Delta_\infty,\Delta_1)$-approximation algorithm with $\Delta_1 \leqslant \varepsilon\,n$, unless P $=$ NP.
\end{thm}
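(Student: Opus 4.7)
The plan is to reduce from a bounded-occurrence, gap version of 3SAT. By the PCP theorem, in the form that yields NP-hardness of Max-E3SAT with bounded variable occurrence, there exist constants $\rho>0$ such that distinguishing 3SAT instances in which all clauses can be simultaneously satisfied from instances in which at most a $(1-\rho)$-fraction can be satisfied is NP-hard; the bounded-occurrence property ensures that a formula on $N$ variables has $m=\Theta(N)$ clauses. Given such a formula $\Phi$, I would construct a $2\times n$ target matrix $A$ with $n=\Theta(N+m)$, a constant tolerance $\bd$, and an explicit list $\mathcal{S}$ of allowed $2\times n$ segments, so that if $\Phi$ is satisfiable then $A$ admits an exact decomposition over $\mathcal{S}$ (hence $\OPT=0$), whereas if $\Phi$ is only $(1-\rho)$-satisfiable then every decomposition $B$ of $A$ over $\mathcal{S}$ satisfies $\|A-B\|_1\geqslant\varepsilon\,n$ for some absolute constant $\varepsilon>0$.

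For the gadget design I would partition the columns into a \emph{variable region} and a \emph{clause region}, with each variable $x_i$ and each clause $C_j$ occupying its own $O(1)$-sized block of columns in both rows. For variable $x_i$, I would include in $\mathcal{S}$ two competing generators whose row-1 intervals coincide but whose row-2 intervals differ, so that the row-1 contribution is neutral but the row-2 decomposition must commit to one of the two, playing the role of a Boolean assignment. For each clause $C_j$ I would introduce a \emph{literal generator} for each of its three literals, whose row-1 interval covers the clause block of $C_j$ and whose row-2 interval is interlocked with the corresponding variable gadget in a way that forces consistency between the assignment chosen in the variable region and the literals available to cover the clause block. The whole construction exploits the fact that a $2\times n$ segment is determined by a single interval per row, so a generator cannot simultaneously serve two disjoint clause blocks, forcing honest per-clause accounting.

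The main obstacle will be the soundness analysis. In the YES case it is usually straightforward to exhibit an exact decomposition, but in the NO case I must argue that an $\Omega(1)$ fraction of unsatisfied clauses yields an $\Omega(n)$ additive deficit in $\|A-B\|_1$, not merely $\Omega(1)$. The difficulty is twofold: the tolerance $\bd$ lets the adversarial decomposition absorb small discrepancies at each clause block without paying in $\ell_1$, and unbounded integer coefficients give a lot of flexibility in how contributions can be shuffled. The natural remedy is to give each clause block sufficient depth in the target $A$ and to keep $\bd$ at a small constant so that any shortfall at an unsatisfied clause block forces at least one column-wise deviation of cost $\Omega(1)$ that cannot be masked within the $\ell_\infty$-tolerance. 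Once this per-clause lower bound is established, summing over the $\Omega(m)=\Omega(n)$ unsatisfied clauses yields $\|A-B\|_1\geqslant\varepsilon\,n$, and the PCP gap of the starting problem propagates to the claimed $\Delta_1\leqslant\varepsilon\,n$ inapproximability for the CVP.
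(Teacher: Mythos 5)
Your overall strategy---a gap reduction from a bounded-occurrence 3SAT promise problem, with variable gadgets on one row, clause gadgets on the other, and two-row segments coupling an assignment choice to the clauses it helps cover---is exactly the route the paper takes (it reduces from 3SAT-6, builds a $2\times 10s$ matrix with $n=5t$, and gets $\varepsilon=(1-\delta)/5$). But as written your proposal stops precisely where the actual proof begins, and the parts you leave open are the ones that are genuinely hard. First, your variable gadget (two competing generators per variable whose row-1 intervals coincide) cannot work as stated: each literal occurs in three distinct clauses, and since a segment carries only one interval per row it can serve only one clause block, so you need one coupled segment per (literal, occurrence) pair. The paper handles this by splitting each variable interval of six ones into \emph{two different triples} of sub-intervals $I_1(y_i),I_2(y_i),I_3(y_i)$ for $y_i\in\{x_i,\bar x_i\}$, arranged so that the true- and false-triples overlap; it is this overlap that penalizes mixing the two polarities. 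Nothing in your sketch forces a decomposition to ``commit to one of the two'': with nonnegative integer coefficients it may use both, neither, or an inconsistent mixture, and you give no mechanism or exchange argument showing such a mixture costs $\Omega(1)$ per affected clause. The paper's soundness proof is essentially one long exchange argument (making every variable ``coherent'' without increasing $\TC$, then reading off an assignment satisfying at least $t-\TC$ clauses, giving $\OPT(\Phi)=t-\OPT(A,\mathcal{S},\infty)$), and that argument is entirely absent from your proposal. Second, your clause gadget omits the filler segments: the paper adds seven extra sub-intervals $I_4(c_j),\dots,I_{10}(c_j)$ (used in segments whose first row is empty) chosen so that \emph{any nonempty} subset of the three literal sub-intervals can be completed to an exact cover of $I(c_j)$, while the empty subset forces a deviation of at least one unit; without specifying such a completion structure you cannot prove either the completeness bound $\TC\leqslant t-\sigma$ or the per-unsatisfied-clause lower bound.

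One further conceptual point: your worry that the tolerance $\bd$ ``lets the adversarial decomposition absorb small discrepancies without paying in $\ell_1$'' is misplaced. The $\ell_1$ objective charges every unit of deviation regardless of $\bd$; the bound $\bd$ only restricts which solutions are feasible, so taking $\bd=\infty$ (as the paper implicitly does, via $\OPT(A,\mathcal{S},\infty)$) only makes the hardness statement stronger and removes the feasibility complication from the argument. So the fix you propose (``give each clause block sufficient depth and keep $\bd$ small'') addresses a non-problem, while the real work---designing interlocking sub-interval decompositions and proving the coherence/exchange lemma that turns any decomposition of cost $\TC$ into an assignment satisfying $t-\TC$ clauses---remains to be done.
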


\begin{proof}
We prove the theorem again by reducing from the promise problem that was introduced in Section \ref{sec:hardness}. Recall that a 3SAT-6 formula is a CNF formula in which each clause contains three literals, and each variable appears non-negated in three clauses, and negated in three other clauses. The problem consists in distinguishing between a formula that is satisfiable and a formula that is $\delta$-satisfiable.

Let a 3SAT-6 formula $\Phi$ in the variables $x_1,\dots,x_s$ be given. Let $c_1$, \ldots, $c_t$ denote the clauses of $\Phi$. We say that a variable $x_i$ and a clause $c_j$ are {\sl incident} if $c_j$ involves $x_i$ or its negation $\bar{x}_i$. By double-counting the incidence between variables and clauses, we find $6s = 3t$, that is,  $t = 2s$.

We build an instance of the restricted CVP as follows: let $A=(a_{ij})$ be the matrix with $2$ rows and $10s$ columns defined as 
\[
a_{ij}=\begin{cases}
0 & \text{if } i=1 \text{ and } 6s+1 \leqslant j \leqslant 10s,\\
1 & \text{otherwise.}\end{cases}
\]
Thus $A$ has $6s$ ones, followed by $4s$ zeros in its first row, and $10s$ ones in its second row. The size of $A$ is $2 \times n$, where $n = 10s=5t$.

To each variable $x_i$ there corresponds an interval $I(x_i) := [6i-5,6i]$ of $6$ ones in the first row of $A$ (in this proof, for the sake of simplicity, we identify intervals of the form $[\ell,r]$ and the $1 \times n$ row vector they represent). We let also $I(\bar{x}_i) := I(x_i)$.

For each interval $I(x_i) = I(\bar{x}_i) = [6i-5,6i]$ we consider two decompositions into three sub-intervals that correspond to setting the variable $x_i$ true or false. The decomposition corresponding to setting $x_i$ true is $I(x_i) = I_1(x_i) + I_2(x_i) + I_3(x_i)$, where $I_1(x_i) :=  [6i-5,6i-5]$, $I_2(x_i) := [6i-4,6i-2]$ and $I_3(x_i) := [6i-1,6i]$. The decomposition corresponding to setting $x_i$ false is $I(\bar{x}_i) = I_1(\bar{x}_i) + I_2(\bar{x}_i) + I_3(\bar{x}_i)$, where $I_1(\bar{x}_i) := [6i-5,6i-4]$, $I_2(\bar{x}_i) := [6i-3,6i-1]$ and $I_3(\bar{x}_i) := [6i,6i]$. An illustration is given in Figure \ref{fig:var_int}.

\begin{figure}[h]
\centering
\unitlength=1.25cm
\begin{tabular}{c@{\hspace{3cm}}c}
\begin{picture}(2.9,1)
\multiput(0.1,.1)(.4,0){7}{\line(0,1){.5}}
\multiput(0.1,.1)(0,.4){2}{\line(1,0){2.4}}
\put(0.1,.1){\rule{.5cm}{.5cm}}
%\put(3.1,.1){\rule{.4cm}{.4cm}}
\put(0,.7){\footnotesize $-5$}
\put(0.45,.7){\footnotesize $-4$}
\put(0.875,.7){\footnotesize $-3$}
\put(1.3,.7){\footnotesize $-2$}
\put(1.725,.7){\footnotesize $-1$}
\put(2.25,.7){\footnotesize $0$}
\put(2.65,.2){\footnotesize $I_1(x_i)$}
\end{picture}
&\begin{picture}(2.9,1)
\multiput(0.1,.1)(.4,0){7}{\line(0,1){.5}}
\multiput(0.1,.1)(0,.4){2}{\line(1,0){2.4}}
\put(0.1,.1){\rule{1cm}{.5cm}}
\put(0,.7){\footnotesize $-5$}
\put(0.45,.7){\footnotesize $-4$}
\put(0.875,.7){\footnotesize $-3$}
\put(1.3,.7){\footnotesize $-2$}
\put(1.725,.7){\footnotesize $-1$}
\put(2.25,.7){\footnotesize $0$}
\put(2.65,.2){\footnotesize $I_1(\bar{x}_i)$}
\end{picture}\\
\begin{picture}(2.9,0.75)
\multiput(0.1,.1)(.4,0){7}{\line(0,1){.5}}
\multiput(0.1,.1)(0,.4){2}{\line(1,0){2.4}}
\put(0.5,.1){\rule{1.5cm}{.5cm}}
\put(2.65,.2){\footnotesize $I_2(x_i)$}
\end{picture}
&\begin{picture}(2.9,0.75)
\multiput(0.1,.1)(.4,0){7}{\line(0,1){.5}}
\multiput(0.1,.1)(0,.4){2}{\line(1,0){2.4}}
\put(0.9,.1){\rule{1.5cm}{.5cm}}
\put(2.65,.2){\footnotesize $I_2(\bar{x}_i)$}
\end{picture}\\
\begin{picture}(2.9,0.75)
\multiput(0.1,.1)(.4,0){7}{\line(0,1){.5}}
\multiput(0.1,.1)(0,.4){2}{\line(1,0){2.4}}
\put(1.7,.1){\rule{1cm}{.5cm}}
\put(2.65,.2){\footnotesize $I_3(x_i)$}
\end{picture}
&\begin{picture}(2.9,0.75)
\multiput(0.1,.1)(.4,0){7}{\line(0,1){.5}}
\multiput(0.1,.1)(0,.4){2}{\line(1,0){2.4}}
\put(2.1,.1){\rule{.5cm}{.5cm}}
\put(2.65,.2){\footnotesize $I_3(\bar{x}_i)$}
\end{picture}
\end{tabular}
\caption{\label{fig:var_int} The sub-intervals used for the variables.}
\end{figure}
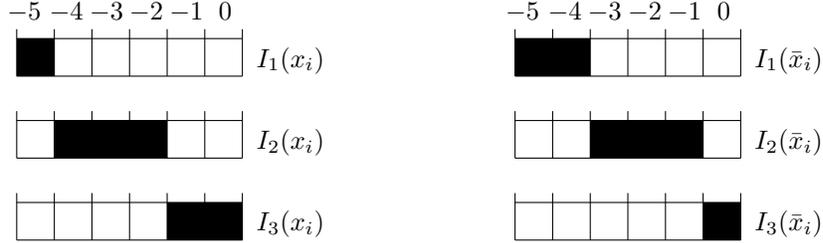

Similarly, to each clause $c_j$ there corresponds an interval $I(c_j) := [5j-4,5j]$ of $5$ ones in the second row of $A$. We define ten sub-intervals that can be combined in several ways to decompose $I(c_j)$. We let $I_1(c_j) := [5j-4,5j-4]$, $I_2(c_j) := [5j-2,5j-2]$, $I_3(c_j) := [5j,5j]$, $I_4(c_j) := [5j-3,5j-3]$, $I_5(c_j) := [5j-1,5j-1]$, $I_6(c_j) := [5j-4,5j-3]$, $I_7(c_j) := [5j-1,5j]$, $I_8(c_j):=[5j-3,5j-1]$, $I_9(c_j) := [5j-4,5j-1]$ and $I_{10}(c_j) := [5j-3,5j]$. An illustration is given in Figure \ref{fig:cl_int}.

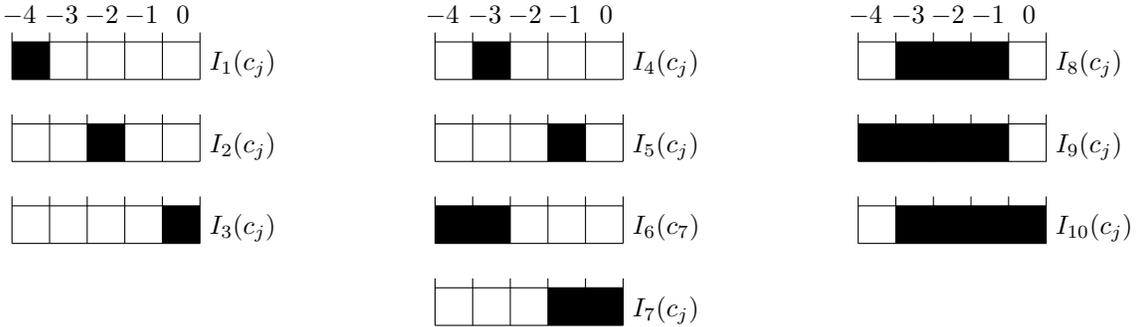
\begin{figure}[h]
\centering
\unitlength=1.25cm
\begin{tabular}{c@{\hspace{2cm}}c@{\hspace{2cm}}c}
\begin{picture}(2.9,1)
\multiput(0.1,.1)(.4,0){6}{\line(0,1){.5}}
\multiput(0.1,.1)(0,.4){2}{\line(1,0){2}}
\put(0.1,.1){\rule{.5cm}{.5cm}}
\put(0,.7){\footnotesize $-4$}
\put(0.45,.7){\footnotesize $-3$}
\put(0.875,.7){\footnotesize $-2$}
\put(1.3,.7){\footnotesize $-1$}
\put(1.85,.7){\footnotesize $0$}
\put(2.2,.2){\footnotesize $I_1(c_j)$}
\end{picture}
&\begin{picture}(2.9,1)
\multiput(0.1,.1)(.4,0){6}{\line(0,1){.5}}
\multiput(0.1,.1)(0,.4){2}{\line(1,0){2}}
\put(0.5,.1){\rule{.5cm}{.5cm}}
\put(0,.7){\footnotesize $-4$}
\put(0.45,.7){\footnotesize $-3$}
\put(0.875,.7){\footnotesize $-2$}
\put(1.3,.7){\footnotesize $-1$}
\put(1.85,.7){\footnotesize $0$}
\put(2.2,.2){\footnotesize $I_4(c_j)$}
\end{picture}
&\begin{picture}(2.9,.75)
\multiput(0.1,.1)(.4,0){6}{\line(0,1){.5}}
\multiput(0.1,.1)(0,.4){2}{\line(1,0){2}}
\put(0.5,.1){\rule{1.5cm}{.5cm}}
\put(0,.7){\footnotesize $-4$}
\put(0.45,.7){\footnotesize $-3$}
\put(0.875,.7){\footnotesize $-2$}
\put(1.3,.7){\footnotesize $-1$}
\put(1.85,.7){\footnotesize $0$}
\put(2.2,.2){\footnotesize $I_8(c_j)$}
\end{picture}\\
\begin{picture}(2.9,0.75)
\multiput(0.1,.1)(.4,0){6}{\line(0,1){.5}}
\multiput(0.1,.1)(0,.4){2}{\line(1,0){2}}
\put(0.9,.1){\rule{.5cm}{.5cm}}
\put(2.2,.2){\footnotesize $I_2(c_j)$}
\end{picture}
&\begin{picture}(2.9,0.75)
\multiput(0.1,.1)(.4,0){6}{\line(0,1){.5}}
\multiput(0.1,.1)(0,.4){2}{\line(1,0){2}}
\put(1.3,.1){\rule{.5cm}{.5cm}}
\put(2.2,.2){\footnotesize $I_5(c_j)$}
\end{picture}
&\begin{picture}(2.9,0.75)
\multiput(0.1,.1)(.4,0){6}{\line(0,1){.5}}
\multiput(0.1,.1)(0,.4){2}{\line(1,0){2}}
\put(0.1,.1){\rule{2cm}{.5cm}}
\put(2.2,.2){\footnotesize $I_9(c_j)$}
\end{picture}\\
\begin{picture}(2.9,0.75)
\multiput(0.1,.1)(.4,0){6}{\line(0,1){.5}}
\multiput(0.1,.1)(0,.4){2}{\line(1,0){2}}
\put(1.7,.1){\rule{.5cm}{.5cm}}
\put(2.2,.2){\footnotesize $I_3(c_j)$}
\end{picture}
&\begin{picture}(2.9,0.75)
\multiput(0.1,.1)(.4,0){6}{\line(0,1){.5}}
\multiput(0.1,.1)(0,.4){2}{\line(1,0){2}}
\put(0.1,.1){\rule{1cm}{.5cm}}
\put(2.2,.2){\footnotesize $I_6(c_7)$}
\end{picture}
&\begin{picture}(2.9,0.75)
\multiput(0.1,.1)(.4,0){6}{\line(0,1){.5}}
\multiput(0.1,.1)(0,.4){2}{\line(1,0){2}}
\put(0.5,.1){\rule{2cm}{.5cm}}
\put(2.2,.2){\footnotesize $I_{10}(c_j)$}
\end{picture}\\
&\begin{picture}(2.9,0.75)
\multiput(0.1,.1)(.4,0){6}{\line(0,1){.5}}
\multiput(0.1,.1)(0,.4){2}{\line(1,0){2}}
\put(1.3,.1){\rule{1cm}{.5cm}}
\put(2.2,.2){\footnotesize $I_7(c_j)$}
\end{picture}
&
\end{tabular}

\caption{\label{fig:cl_int} The sub-intervals used for the clauses.}
\end{figure}

The three first sub-intervals $I_\alpha(c_j)$ ($\alpha \in \{1,2,3\}$) correspond to the three literals of the clause $c_j$. The last seven sub-intervals $I_\alpha(c_j)$ ($\alpha \in \{4,\ldots,10\}$) alone are not sufficient to decompose $I(c_j)$ exactly. In fact, if we prescribe any subset of the first three sub-intervals in a decomposition, we can complete the decomposition using some of the last seven intervals to an exact decomposition of $I(c_j)$ in all cases but one: If none of the three first sub-intervals is part of the decomposition, the best we can do is to approximate $I(c_j)$ by, for instance, $I_9(c_j)$, resulting in a total change of $1$ for the interval.

The CVP instance has $k=20s=10t$ allowed segments. The first $3t$ segments correspond to pairs $(y_i,c_j)$ where $y_i \in \{x_i,\bar{x}_i\}$ is a literal and $c_j$ is a clause involving $y_i$. Consider such a pair $(y_i,c_j)$ and assume that $y_i$ is the $\alpha$-th literal of $c_j$, and $c_j$ is the $\beta$-th clause containing $y_i$ (thus $\alpha, \beta \in \{1,2,3\}$). The segment associated to the pair $(y_i,c_j)$ is $S(y_i,c_j) := (I_\beta(y_i),I_\alpha(c_j))$. The last $7t$ segments are of the form $S_\gamma(c_j) := (\varnothing,I_\gamma(c_j))$ where $c_j$ is a clause and $\gamma \in \{4,\ldots,10\}$. We denote the resulting set of segments by $\mathcal{S} = \{S_1,\ldots,S_k\}$.
This concludes the description of the reduction. Note that the reduction is clearly polynomial. 

Now suppose we have a truth assignment that satisfies $\sigma$ of the clauses of $\Phi$. Then we can find an approximate decomposition of $A$ with a total change of $t-\sigma$ by summing all $3s$ segments of the form $S(y_i,c_j)$ where $y_i = x_i$ if $x_i$ is set to true, and $y_i = \bar{x}_i$ if $x_i$ is set to false, and a subset of the segments $S_\gamma(c_j)$ for all clauses $c_j$ of the formula.

Conversely, assume that we have an approximate decomposition $B := \sum_{j=1}^k u_j S_j$ of $A$ with a total change of $\TC := \|A-B\|_1$. Because $A$ is binary, we may assume that $\mathbf{u}$ is also binary. We say that a variable $x_i$ is {\sl coherent} (w.r.t.\ $B$) if the deviation $\TC(x_i)$ for the interval $I(x_i)$ is zero. The variable is said to be {\sl incoherent} otherwise. Similarly, we say that a clause $c_j$ is {\sl satisfied} (again, w.r.t.\ $B$) if the deviation $\TC(c_j)$ for the interval $I(c_j)$ is zero, and {\sl unsatisfied} otherwise.

We can modify the approximation $B = \sum_{j=1}^k u_j S_j$ in such a way to make all variables coherent, without increasing $\TC$, for the following reasons.

First, while there exist numbers $\alpha$, $i$, $j$ and $j'$ such that $S(x_i,c_j)$ and $S(\bar{x}_i,c_{j'})$ are both used in the decomposition and $c_j$, $c_{j'}$ are the $\alpha$-th clauses respectively containing $x_i$ and $\bar{x}_i$, we can remove one of these two segments from the decomposition and change the segments of the form $S_\gamma(c_j)$ or $S_\gamma(c_{j'})$ that are used, in such a way that $\TC$ does not increase. More precisely, $\TC(x_i)$ decreases by at least one, and $\TC(c_j)$ or $\TC(c_{j'})$, but not both, increases by at most one. Thus we can assume that, for every indices $\alpha$, $i$, $j$ and $j'$ such that $c_j$ is the $\alpha$-th clause containing $x_i$ and $c_{j'}$ is the $\alpha$-th clause containing $\bar{x}_i$, at most one of the two segments $S(x_i,c_j)$ and $S(\bar{x}_i,c_{j'})$ is used in the decomposition. Similarly, we can assume that, for every indices $\alpha$,  $i$, $j$ and $j'$ such that $c_j$ and $c_{j'}$ are the $\alpha$-th clauses respectively containing $x_i$ and $\bar{x}_i$, at least one of the two segments $S(x_i,c_j)$ and $S(\bar{x}_i,c_{j'})$ is used in the decomposition. All in all, we have that exactly one of the two segments $S(x_i,c_j)$ and $S(\bar{x}_i,c_{j'})$ is used.

Second, while some variable $x_i$ remains incoherent, we can replace the segment of the form $S(y_i,c_j)$, where $y_i \in \{x_i,\bar{x}_i\}$, with a segment of the form $S(\bar{y}_i,c_{j'})$ and change some segments of the form $S_\gamma(c_j)$ or $S_\gamma(c_{j'})$ ensuring that $\TC$ does not increase (again, $\TC(x_i)$ decreases by at least one and either $\TC(c_j)$ or $\TC(c_{j'})$ increases by at most one). Therefore, we can assume that all variables are coherent.

Now, by interpreting the relevant part of the decomposition of $B$ as a truth assignment, we obtain truth values for the variables of $\Phi$ satisfying at least $t - \TC$ of its clauses.

Letting $\OPT(\Phi)$ and $\OPT(A,\mathcal{S},\infty)$ denote the maximum number of clauses of $\Phi$ that can be simultaneously satisfied and the total change of an optimal solution of the CVP with $\bd = \infty$, we have $\OPT(\Phi) = t - \OPT(A,\mathcal{S},\infty)$.

Using Lemma \ref{lem:reduction} we know that there exists some $\delta \in (0,1)$ such that it is NP-hard to distinguish, among all 3SAT-6 instances $\Phi$ with $t$ clauses, instances such that $\OPT(\Phi) = t$ from instances such that $\OPT(\Phi) < \delta\,t$. Let $\varepsilon = (1-\delta)/5$. Using the above reduction and the same ideas as in the proof of Theorem \ref{hard_approx}, the theorem follows.
\end{proof}

%------------------------------------------------------------------
%------------------------------------------------------------------

\section{Incorporating the beam-on time into the objective function.}

In this section we generalize the results of this paper in the case where we do not only want to minimize the total change, but a combination of the total change and the sum of the coefficients $u_j$ for $j\in[k]$. More precisely, we replace the original objective function $\|\mathbf{a} - \mathbf{b}\|_1$ by
$$
\mu \cdot \|\mathbf{a} - \mathbf{b}\|_1 + \nu \cdot \sum_{j=1}^k u_j,
$$
where $\mu$ and $\nu$ are arbitrary nonnegative importance factors. Throughout this section, we study the CVP under this objective function. The resulting problem is denoted {\sl CVP-BOT}.

Let us recall that in the IMRT context the generators from $\mathcal{G}$ represent segments that can be generated by the MLC. The coefficient $u_j$ associated to the segment $\mathbf{g}_j$ for $j \in [k]$ gives the total time during which the patient is irradiated with the leaves of the MLC in a certain position. Hence, the sum of the coefficients exactly corresponds to the total time during which the patient is irradiated ({\sl beam-on time}). In order to avoid overdosage in the healthy tissues due to unavoidable diffusion effects as much as possible, it is desirable to take the beam-on time into account in the objective function.

Here, we observe that the main results of the previous sections still hold with the new objective function.

First, for the hardness results, this is obvious because taking $\mu = 1$ and $\nu = 0$ gives back the original objective function.

Second, for showing that CVP-BOT is polynomial when matrix $G$ defined by the generators is totally unimodular, we use the following LP relaxation:
\begin{equationarray*}{r@{\quad}rcl@{\qquad}l}
\textrm{(LP')} &\multicolumn{3}{@{}l}{\min \mu \cdot \sum_{i=1}^d (\alpha_i + \beta_i) + \nu \cdot \sum_{j=1}^k u_j}\\
\mbox{s.t.} &\sum_{j=1}^{k}u_jg_{ij} - \alpha_i + \beta_i &=&a_i& \forall i\in\left[d\right]\\
&\alpha_i &\geqslant& 0 & \forall i\in\left[d\right]\\[1ex]
&\beta_i &\geqslant& 0 & \forall i\in\left[d\right]\\[1ex]
&\alpha_i &\leqslant& \bd & \forall i\in\left[d\right]\\[1ex]
&\beta_i &\leqslant& \bd & \forall i\in\left[d\right]\\[1ex]
&u_j&\geqslant& 0 & \forall j\in\left[k\right].
\end{equationarray*}
Furthermore, if the columns of $G$ satisfy the consecutive ones property, we can still give a direct reduction to the minimum cost flow problem. Indeed, it suffices to redefine the cost of the arcs of $D$ by letting the cost of arcs of the form $(j,j+1)$ or $(j+1,j)$ (for $j \in [d]$) be $\mu$, and the costs of the other arcs be $\nu$.

Finally, we can also find an $\Big(O(\sqrt{d \ln d}),O(d\,\sqrt{d \ln d})\Big)$-approximation algorithm for CVP-BOT, by using an extension of the randomized rounding technique due to Srivinasan \cite{Sri01}, and its recent derandomization by Doerr and Wahlstr\"om \cite{Doerr09}.

Consider an instance $(H,\mathbf{x})$ of the lattice approximation problem. Assume that $\sum_{j=1}^d x_j \in \mathbb{Z}_+$. We wish to round $\mathbf{x}$ to a binary vector $\mathbf{y}$ such that $\sum_{j=1}^d x_j = \sum_{j=1}^d y_j$ and $\|H(\mathbf{x}-\mathbf{y})\|_\infty = O(\sqrt{d \ln d}\,)$. Srivinasan \cite{Sri01} obtained a randomized polynomial-time algorithm achieving this with high probability. A recent result of Doerr and Wahlstr\"om \cite{Doerr09} implies the following theorem.

\begin{thm}[\cite{Doerr09}] \label{thm:Doerr09}  Let $H \in \{0,1\}^{d \times d}$ and $\mathbf{x} \in \mathbb{Q}^d \cap [0,1]^d$ such that $\sum_{j=1}^d x_j\in\mathbb{Z}_+$. Then, a vector $\mathbf{y}$ can be computed in time $O(d^2)$ such that $\sum_{j=1}^d y_j=\sum_{j=1}^d x_j$ and
\[
\left\|H (\mathbf{x}-\mathbf{y})\right\|_{\infty}\leqslant (e-1)\sqrt{d \ln d}.
\]
\end{thm}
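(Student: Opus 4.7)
The plan is to combine a sum-preserving pipage-style rounding procedure with the method of conditional probabilities, using an exponential pessimistic estimator. The hypothesis $\sum_j x_j \in \mathbb{Z}_+$ is crucial: it means we can reach \emph{some} $\{0,1\}^d$-vector with the same sum by a sequence of local two-coordinate swaps, so that the cardinality constraint is enforced exactly and only the $\ell_\infty$ guarantee needs to be proved via concentration.

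First I would describe a sum-preserving randomized rounding. At each step, pick two fractional coordinates $i$ and $j$ and consider transferring an amount $\epsilon>0$ between them; randomize the sign of the transfer with probabilities tuned so that each coordinate's expectation is preserved, and pick $\epsilon$ maximal so that one of $x_i, x_j$ is pushed into $\{0,1\}$. Each step freezes at least one coordinate, hence after at most $d$ moves we reach $\mathbf{y}\in\{0,1\}^d$ with $\sum_j y_j=\sum_j x_j$. For each row $H_i$ of $H$, the scalar $H_i \mathbf{x}^{(t)}$ is then a martingale with $O(1)$-bounded increments, and Azuma/Chernoff applied row-by-row plus a union bound over the $d$ rows yields the claimed $O(\sqrt{d\ln d})$ bound for $\|H(\mathbf{x}-\mathbf{y})\|_\infty$ with positive probability; this already proves the randomized version.

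To derandomize, I would track the pessimistic estimator
\[
\Phi^{(t)} := \sum_{i=1}^d \cosh\bigl(\lambda\, H_i(\mathbf{x}^{(t)}-\mathbf{x}^{(0)})\bigr),
\]
with $\lambda>0$ tuned to the target Chernoff threshold. Since the randomized swap rule is unbiased and $\cosh$ is convex, a short computation gives $E[\Phi^{(t+1)}\mid \mathbf{x}^{(t)}]\leqslant \Phi^{(t)}$, so at each move at least one of the two deterministic outcomes does not increase $\Phi^{(t)}$; the derandomized algorithm simply picks that outcome. After $T=O(d)$ moves one still has $\Phi^{(T)}\leqslant \Phi^{(0)}=d$, and if $\|H(\mathbf{x}-\mathbf{y})\|_\infty>\tau$ for some row then $\Phi^{(T)}\geqslant \tfrac{1}{2}\exp(\lambda\tau)$; optimizing $\lambda$ via the inequality $e^y\leqslant 1+(e-1)y$ on $[0,1]$ (the step that produces the precise constant $e-1$) gives $\tau\leqslant (e-1)\sqrt{d\ln d}$.

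The main obstacle I expect is the $O(d^2)$ runtime. A naive implementation would recompute the estimator from scratch at each move, paying $\Theta(d)$ per row and $\Theta(d^2)$ per move, for a total of $\Theta(d^3)$. Reaching $O(d^2)$ uses the fact that each swap alters only two entries of $\mathbf{x}^{(t)}$, so every residual $H_i(\mathbf{x}^{(t)}-\mathbf{x}^{(0)})$, and hence its contribution to $\Phi^{(t)}$, can be updated in $O(1)$ time, giving $O(d)$ work per move. Doing this update and the comparison of the two candidate moves in a numerically clean way, while preserving the sharp $(e-1)$ constant, is the delicate algorithmic point that I would leave to Doerr and Wahlström's construction.
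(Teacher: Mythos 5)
First, note that the paper does not prove this statement at all: Theorem~\ref{thm:Doerr09} is imported verbatim from Doerr and Wahlstr\"om \cite{Doerr09} (derandomizing Srinivasan's sum-preserving rounding \cite{Sri01}), so there is no in-paper argument to compare against. Your overall plan --- pipage-style two-coordinate transfers that preserve the sum and fix at least one coordinate per step, a martingale/Chernoff analysis per row with a union bound, a potential-function derandomization, and the observation that incremental updates of the $d$ row residuals give $O(d)$ work per step and hence $O(d^2)$ total --- is indeed the approach of the cited works, and the randomized half of your sketch is sound.

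There is, however, a genuine flaw in the derandomization step. You claim that since the swap is unbiased and $\cosh$ is convex, $E[\Phi^{(t+1)}\mid \mathbf{x}^{(t)}]\leqslant \Phi^{(t)}$. Convexity gives exactly the opposite: each row residual changes by a mean-zero random amount, so by Jensen $E\bigl[\cosh\bigl(\lambda H_i(\mathbf{x}^{(t+1)}-\mathbf{x}^{(0)})\bigr)\mid \mathbf{x}^{(t)}\bigr]\geqslant \cosh\bigl(\lambda H_i(\mathbf{x}^{(t)}-\mathbf{x}^{(0)})\bigr)$, and hence $E[\Phi^{(t+1)}\mid\mathbf{x}^{(t)}]\geqslant\Phi^{(t)}$; the bare sum of hyperbolic cosines is a submartingale, not a supermartingale, and the conclusion $\Phi^{(T)}\leqslant\Phi^{(0)}=d$ does not follow. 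The standard repair is to allow controlled multiplicative growth: for a mean-zero increment $\delta Z$ with $|Z|\leqslant 1$ one has $E[\cosh(\lambda(a+\delta Z))]\leqslant \cosh(\lambda a)\,e^{\lambda^2\delta^2/2}$, so at each step one can deterministically choose the outcome whose potential is at most the conditional expectation, giving $\Phi^{(T)}\leqslant d\,e^{\lambda^2\sum_t\delta_t^2/2}\leqslant d\,e^{\lambda^2 d/2}$ and, after optimizing $\lambda$, a bound of order $\sqrt{2d\ln(2d)}$; alternatively one uses a genuine Raghavan-type pessimistic estimator that incorporates the moment-generating-function bounds of the yet-unfixed randomness. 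Either fix works, but your claim that the specific constant $e-1$ drops out of optimizing $\lambda$ via $e^y\leqslant 1+(e-1)y$ is not substantiated by the argument as written (that inequality belongs to the one-sided Chernoff bound for $[0,1]$-valued increments, not to your symmetric $\cosh$ potential), so as it stands the sketch neither establishes the supermartingale property it relies on nor the stated constant, even though the asymptotic $O(\sqrt{d\ln d})$ guarantee is recoverable along the corrected lines.
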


Let again $(\boldsymbol{\alpha^*},\boldsymbol{\beta^*},\mathbf{u}^*)$ denote any extremal optimal solution of (LP'). Recall that at most $d$ of the $k$ components of $\mathbf{u^*}$ are nonzero. Without loss of generality, we can assume that $u^*_j = 0$ for $j > d$.

Now, define $H$ and $\mathbf{x}$ as previously. Because it might be the case that $\sum_{j=1}^d x_j \notin \mathbb{Z}_+$, we turn $H$ and $\mathbf{x}$ respectively into a $(d+1) \times (d+1)$ matrix and a $(d+1) \times 1$ vector by letting $x_{d+1} := \left\lceil \sum_{j=1}^d x_j\right\rceil- \sum_{j=1}^d x_j$ and $h_{d+1,j} = h_{i,d+1} := 0$ for all $i, j \in [d+1]$.

By Theorem \ref{thm:Doerr09}, one can find in $O(d^2)$ time a vector $\mathbf{y} \in \{0,1\}^{d+1}$ such that $\sum_{j=1}^{d+1} y_j = \sum_{j=1}^{d+1} x_j$ and $\|H (\mathbf{x}-\mathbf{y})\|_{\infty} \leqslant (e-1)\sqrt{(d+1) \ln (d+1)} = O(\sqrt{d \ln d}\,)$. We then let $\tilde{u}_j = \lfloor u^*_j \rfloor + y_j$ for $j \in [d]$ and $\tilde{u}_j = 0$ for $j \in [k] \setminus [d]$. The corresponding approximation of $\mathbf{a}$ is $\mathbf{\tilde{b}} := G\mathbf{\tilde{u}}$. Notice that the beam-on-time $\sum_{j=1}^d u_j^*$ will be rounded to $\lfloor \sum_{j=1}^d u_j^* \rfloor$ if $y_{d+1}=1$ and to $\lceil \sum_{j=1}^d u_j^* \rceil$ if $y_{d+1}=0$. Using similar arguments as those used in the proof of Theorem \ref{thm:rounding1}, we see that Theorem \ref{thm:Doerr09} leads to a polynomial-time $\Big(O(\sqrt{d \ln d}),O(d\,\sqrt{d \ln d})\Big)$-approximation algorithm for CVP-BOT.

%------------------------------------------------------------------
%------------------------------------------------------------------

\section{Conclusion} \label{sec:conclusion}

Here are further questions we leave open for future work:

\begin{itemize}
\item Our results confirm that it is worth to solve the natural LP relaxation of the problem (see page \pageref{sec:TU}). This can be done efficiently when the generators are explicitly given. However, in practical applications, the generators are implicitly given. Improving our understanding of when the LP relaxation can still be solved in polynomial-time is a first interesting open question.

\item Our second question concerns the tightness of the (in)approximability results developed here, in particular for the case $\bd = \infty$. We prove that there is no polynomial-time approximation algorithm with an additive approximation guarantee of $(\ln 2-\varepsilon)\,d$, unless P $=$ NP. On the other hand, we give a randomized approximation algorithm with an $O(d\sqrt{d}\,)$ additive approximation guarantee. What is the true approximability threshold of the CVP (restricted to instances where $\bd = \infty$)?

\item Our third question is more algorithmic and concerns the application of the CVP to IMRT. There exists a simple, direct algorithm for checking whether an intensity matrix $A$ can be decomposed exactly under the minimum separation constraint or not \cite{Kal06,Kam03a}. Is there a simple, direct algorithm for approximately decomposing an intensity matrix under this constraint as well?
\end{itemize}

%------------------------------------------------------------------
%------------------------------------------------------------------

\section{Acknowledgments}

We thank \c{C}i\u{g}dem G\"uler and Horst Hamacher from the University of Kaiserslautern for several discussions in the early stage of this research project. We also thank Maude Gathy, Guy Louchard and Yvik Swan from Universit\'e Libre de Bruxelles, and Konrad Engel and Thomas Kalinowski from Universit\"at Rostock for discussions.

We also thank three anonymous referees for their useful comments.

\medskip

\noindent\textbf{Note added during the revision of this manuscript.} A recent breakthrough result due to Bansal \cite{Bansal10} might lead to a randomized $(O(\sqrt{d}),O(d\sqrt{d}))$-approximation algorithm for the CVP, hence improving both Theorem \ref{thm:rounding1} and Theorem \ref{thm:rounding2}.

%------------------------------------------------------------------
%------------------------------------------------------------------

\newpage

\section*{Appendix}

\noindent \textbf{Lemma.}
\textit{Let $q$ be a positive integer and let $X_1,X_2,\dots,X_q$ be $q$ independent random variables such that, for all $j\in [q]$, $P[X_j = 1-p_j] = p_j$ and $P[X_j = -p_j] = 1-p_j$. Then
$$
E\Big[|X_1 + X_2 + \cdots + X_q|\Big] \leqslant \sqrt{\frac{\ln 2}{2}}\;\sqrt{q}.
$$}

\begin{proof}
Let $X := X_1 + X_2 + \cdots + X_q$. By Lemmas A.1.6 and A.1.8 from Alon and Spencer \cite{alon}, we have
$$
E[e^{\lambda X}] \leqslant \Big(p\,e^{(1-p)\lambda} + (1-p)\,e^{-\lambda p}\Big)^q
\leqslant e^{\frac{\lambda^2}{8}\,q},
$$
for every $\lambda \in \mathbb{R}$ and $p:=\frac{1}{q} \sum_{j=1}^q p_j$. Hence,
$$
e^{E[\lambda |X|]} \leqslant E[e^{\lambda |X|}]
\leqslant E[e^{\lambda X}] + E[e^{-\lambda X}]
\leqslant 2e^{\frac{\lambda^2}{8}\,q},
$$
where the first inequality follows from Jensen's inequality (and the fact that the exponential is convex) and the second inequality follows from the easy inequality $e^{|x|} \leqslant e^{x} + e^{-x}$. Taking logarithms and letting $\lambda = \sqrt{\frac{8 \ln 2}{q}}$, we get the result.
\end{proof}
\end{document}